\newcommand\COMP{\hbox{C\kern -.58em {\raise .54ex \hbox{$\scriptscriptstyle |$}}
\kern-.55em {\raise .53ex \hbox{$\scriptscriptstyle |$}} }}
\newcommand\NN{\hbox{I\kern-.2em\hbox{N}}}
\newcommand\RR{\hbox{I\kern-.2em\hbox{R}}}
\newcommand\sRR{{\it \hbox{I\kern-.2em\hbox{R}}}}
\newcommand\QQ{\hbox{I\kern-.53em\hbox{Q}}}
\newcommand\PP{\hbox{I\kern-.53em\hbox{P}}}
\newcommand\EE{\hbox{I\kern-.53em\hbox{E}}}
\newcommand\ZZ{{{\rm Z}\kern-.28em{\rm Z}}}
\newcommand\be{\begin{equation}}
\newcommand\ee{\end{equation}}
\newtheorem{theorem}{Theorem}[section]
\newtheorem{remark}[theorem]{Remark}
\newtheorem{example}[theorem]{Example}
\newtheorem{lemma}[theorem]{Lemma}
\newtheorem{definition}[theorem]{Definitions}
\def \Lbrack {[\![}
\def \Rbrack {]\!]}
\begin{document}
\title{How Non-Arbitrage, Viability and Num\'eraire Portfolio are Related\thanks{This research was supported financially by the Natural Sciences and Engineering Research Council of Canada, through Grant G121210818.}}

%\title{The Fundamental Theorem of Utility Maximization and Num\'eraire Portfolio\thanks{This research was supported financially by the Natural Sciences
%and Engineering Research Council of Canada, through Grant G121210818.} %\thanks{Grants or other notes
%about the article that should go on the front page should be
%placed here. General acknowledgments should be placed at the end of the article.}

%\titlerunning{Utility Maximization and Arbitrage}        % if too long for running head

\author{ Tahir Choulli\thanks{corresponding author, {Email: tchoulli@ualberta.ca} }, Jun Deng and  Junfeng Ma\\ \\   Mathematical and Statistical Sciences Dept.\\ University of  Alberta, Edmonton, Alberta \\
}

%\date{Received: November 29, 2012 /second version received May 14,2013}
%\date{Received: November 29, 2012 / Accepted: date}
% The correct dates will be entered by the editor

\maketitle
\begin{abstract} This paper proposes two approaches that quantify the exact relationship among the viability, the absence of arbitrage, and/or the existence of the num\'eraire portfolio under minimal assumptions and for general continuous-time market models. Precisely, our first and principal contribution proves the equivalence among the
 No-Unbounded-Profit-with-Bounded-Risk condition (NUPBR hereafter), the existence of the num\'eraire portfolio, and the existence of the optimal portfolio under an equivalent probability measure for any ``nice" utility and positive initial capital. Herein, a ``nice" utility is any smooth von Neumann-Morgenstern utility satisfying Inada's conditions and the elasticity assumptions of Kramkov and Schachermayer. Furthermore, the equivalent probability measure ---under which the utility maximization problems have solutions--- can be chosen as close to the real-world probability measure as we want (but might not be equal). Without changing the underlying probability measure and under mild assumptions, our second contribution proves that the NUPBR is equivalent to the ``{\it local}" existence of the optimal portfolio. This constitutes an alternative to the first contribution, if one insists on working under the real-world probability. These two contributions lead naturally to new types of viability that we call weak and local viabilities.
\end{abstract}

%The fundamental theorem of utility maximization (called FTUM hereafter) says that the utility maximization
%admits solution if and only if there exists an equivalent martingale measure. This theorem is true for discrete
% market models (where the number of scenarios is finite), and remains valid for general discrete-time market models
 %when the utility is smooth enough. However, this theorem --in this current formulation-- fails in continuous-time framework even with nice utility
 %function, where there might exist arbitrage opportunities and optimal portfolio. This paper addresses the question how
 %far we can weaken the non-arbitrage condition as well as the utility maximization problem to preserve their complete and
 %strong relationship described by the FTUM. As application of our version of the FTUM, we establish equivalence between the
 %No-Unbounded-Profit-with-Bounded-Risk condition, the existence of num\'eraire portfolio, and the existence of solution to
 %the utility maximization under equivalent probability measure. The latter fact can be interpreted as a sort of weak form of
 %market's viability, while this equivalence is established with a much less technical approach. Furthermore, the obtained equivalent
 % probability can be chosen as close to the real-world probability measure as we want (but might not be equal).
 % \keywords{Utility Maximization\and Num\'eraire Portfolio\and Logarithm Utility\and Market's Viability\and %Martingale Densities \and Non-Arbitrage\and Semimartingales.}
  \vskip 0.15cm

\noindent {\bf Mathematics Subject Classification (2010):}   91G10, 91G99, 91B16, 60G48, 60G46, 60H05.
\vskip 0.15cm

\noindent {\bf JEL Classification:} G10.
%\tableofcontents

%\newpage
\section{Introduction} This paper discusses the three financial concepts of non-arbitrage, viability and num\'eraire portfolio. Among these concepts, the num\'eraire portfolio is the most recent concept that was introduced by Long in \cite{Long1990}. It is the portfolio with positive value process such that zero is always the best conditional forecast of the num\'eraire-dominated rate of return of every portfolio. The market's viability was defined ---up to our knowledge--- by Harrison and Kreps in \cite{Harrisonkreps} (see also \cite{kreps1981}, and \cite{Jouinikallalnapp2001}) as the setting for which there exists a risk-averse agent who prefers more to less, has continuous preference, and can find an optimal net trade subject to his budget constraint. In terms of the popular von Neumann-Morgenstern utility, the viability is essentially equivalent to the existence of the solution to the utility maximization problem. In contrast to the viability, the absence of arbitrage has several competing definitions which often vary with the market model in consideration. Among these, we can cite Non-Arbitrage, No-Unbounded-Profit-with-Bounded-Risk (NUPBR hereafter), No-Free-Lunch (NFL hereafter), No-Free-Lunch-with-Bounded-Risk, No-Free-Lunch-with-Vanishing-Risk (NFLVR hereafter), Asymptotic Arbitrage, Immediate Arbitrage, ..., etcetera. Philosophically, an arbitrage opportunity is a transaction with no cash outlay that results in a sure profit. In discrete-time markets, many arbitrage notions coincide and the following holds.
\begin{theorem}\label{finanacialEconomicalTheorem}For discrete-time market models with finite and deterministic horizons, the following are equivalent:\\
{\rm{(a)}} The market is viable/Utility Maximization admits solution for a "nice" von Neumann-Morgenstern utility,\\
{\rm{(b)}} Absence of arbitrage opportunities,\\
{\rm{(c)}} There exists an equivalent martingale measure (EMM hereafter),\\
{\rm{(d)}} The num\'eraire portfolio exists.
\end{theorem}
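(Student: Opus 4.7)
The plan is to prove the equivalences by arranging them into the cycle $(b) \Leftrightarrow (c) \Rightarrow (a) \Rightarrow (b)$ and then $(c) \Leftrightarrow (d)$, since the equivalence $(b) \Leftrightarrow (c)$ is the classical Dalang--Morton--Willinger theorem and it is convenient to use the EMM as a hub.

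First I would tackle the implication $(b) \Rightarrow (c)$, which is the technical heart of the theorem. The plan is a backward induction on the trading dates $t=T,T-1,\dots,0$. At each step, one-period no-arbitrage on $\mathcal{F}_t$ is used to extract a bounded, strictly positive, $\mathcal{F}_{t+1}$-measurable density $f_{t+1}$ such that $\mathbb{E}[f_{t+1}\Delta S_{t+1}\mid \mathcal{F}_t]=0$. The existence of such a density on each conditional fiber follows from a separation argument, and its measurable selection as a function of $\omega$ is obtained through a measurable selection theorem applied to the convex-valued multifunction $\omega\mapsto\{$candidate densities$\}$. Taking the product $Z_T=\prod_{t=1}^T f_t$ and normalising yields the EMM $\mathbb{Q}\sim\mathbb{P}$. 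The converse $(c)\Rightarrow (b)$ is immediate: any admissible strategy $H$ produces a $\mathbb{Q}$-martingale $H\cdot S$ with $(H\cdot S)_0=0$, so $(H\cdot S)_T\geq 0$ together with $\mathbb{P}\bigl((H\cdot S)_T>0\bigr)>0$ would give $\mathbb{E}_{\mathbb{Q}}[(H\cdot S)_T]>0$, a contradiction.

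Second, for $(c)\Rightarrow (a)$ I would appeal to the Kramkov--Schachermayer duality framework. Given the EMM $\mathbb{Q}$, the set $\mathcal{C}(x)$ of attainable terminal wealths from initial capital $x>0$ is a convex, solid subset of $L^0_+$, and the dual bound $\mathbb{E}_{\mathbb{Q}}[X]\leq x$ provides compactness in the dual formulation. Under Inada's conditions and the elasticity assumption on $U$, the dual problem $\min_y \mathbb{E}[\tilde{U}(y\,d\mathbb{Q}/d\mathbb{P})]$ admits a minimiser, and the bipolar/Fenchel duality machinery yields a primal maximiser $\widehat{X}\in\mathcal{C}(x)$. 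Since the horizon is finite and the filtration is discrete, $\widehat{X}$ is replicated by some self-financing admissible strategy, providing the optimal portfolio. For $(a)\Rightarrow (b)$ I would argue by contradiction: an arbitrage $H$ can be added to the optimal strategy to strictly dominate $\widehat{X}$ on a set of positive probability, and strict monotonicity of $U$ together with $U'>0$ gives a strict improvement of expected utility, contradicting the optimality of $\widehat{X}$.

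Finally, the equivalence $(c)\Leftrightarrow(d)$ is handled via the log-utility problem. For $(c)\Rightarrow(d)$, applying the previous duality argument to $U(x)=\log x$ produces an optimal strategy whose wealth process $V^{\ast}$ is the num\'eraire portfolio: the first-order conditions translate to the property that $S/V^{\ast}$ (and in fact every admissible wealth divided by $V^{\ast}$) is a $\mathbb{P}$-supermartingale, which is the defining numeraire property. For $(d)\Rightarrow(b)$, if $V^{\ast}$ is the num\'eraire portfolio and $H$ produced arbitrage, the process $(x+H\cdot S)/V^{\ast}$ would be a nonnegative supermartingale starting from $x/V_0^{\ast}$ with strictly larger terminal expectation, a contradiction. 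I expect the main obstacle to be the measurable selection argument in $(b)\Rightarrow(c)$, specifically the care needed to keep the selected density both strictly positive and uniformly integrable when piecing together the one-step constructions over $t=0,\dots,T-1$; the remaining implications are essentially convex-analytic consequences of this core result combined with the Kramkov--Schachermayer theory.
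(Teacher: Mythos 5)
A preliminary remark: the paper does not prove Theorem \ref{finanacialEconomicalTheorem} at all --- it is stated as a synthesis of known results, with (b)$\Leftrightarrow$(c) attributed to Dalang--Morton--Willinger, (a)$\Leftrightarrow$(b) to Kallsen and R\'asonyi--Stettner, and (b)$\Leftrightarrow$(d) to Long and his successors. Your cycle $(b)\Leftrightarrow(c)\Rightarrow(a)\Rightarrow(b)$ plus $(c)\Leftrightarrow(d)$ is therefore measured against those references. The backward induction with measurable selection for $(b)\Rightarrow(c)$ and the monotonicity contradiction for $(a)\Rightarrow(b)$ are the standard arguments and are fine as sketches (one small imprecision: under $\mathbb{Q}$ the process $H\cdot S$ is a martingale transform, hence only a local martingale; the NA argument uses that it is a supermartingale once bounded below).

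There are, however, two genuine gaps. First, in $(c)\Rightarrow(a)$ the Kramkov--Schachermayer machinery delivers a maximiser only under the standing hypothesis $u(x)=\sup_\theta E\bigl[U(x+(\theta\cdot S)_T)\bigr]<\infty$, and this is \emph{not} automatic in finite discrete time once $\Omega$ is infinite: a one-period model with $\Delta S_1\geq -1/2$, both signs charged (so NA holds and an EMM exists) and tail $P(\Delta S_1>x)\sim x^{-1/2}$ gives $u(1)=\infty$ for $U(x)=\sqrt{x}$, even though the asymptotic elasticity is $1/2$. So either one restricts to a finite probability space, reads (a) as ``for some nice utility,'' or replaces the duality argument by the dynamic-programming proof of R\'asonyi--Stettner, which establishes finiteness and attainment simultaneously; this integrability obstruction is exactly what the paper's Theorem \ref{NumeraireLogarithm} is designed to remove by changing the measure, and it equally affects your construction of the num\'eraire portfolio via the \emph{global} log-utility problem (which needs $E[\log(1/Z_T)]<\infty$); the robust route is the one-period conditional problems $\max_H E[\log(1+H\Delta S_{t+1})\mid{\cal F}_t]$. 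Second, your $(d)\Rightarrow(b)$ does not close as written: the num\'eraire property gives $E\bigl[(x+(H\cdot S)_T)/V^*_T\bigr]\leq x/V^*_0$, while an arbitrage only yields $(x+(H\cdot S)_T)/V^*_T\geq x/V^*_T$ and $E[x/V^*_T]\leq x/V^*_0$ as well, so no contradiction arises --- indeed the paper's Example \ref{MainExample} is a model where the num\'eraire portfolio and arbitrage coexist. The correct discrete-time argument is the one the paper uses in Step 3 of the proof of Theorem \ref{NumeraireLogarithm}: the supermartingale property plus Markov's inequality gives boundedness in probability of ${\cal X}_T(1)$ (NUPBR), and one then invokes the fact that NUPBR and NA coincide in finite discrete time.
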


\noindent This theorem is a combination of results that were established, initially for discrete markets (i.e. markets with finite number of scenarios and trading times), in the economics and/or financial literature. In order to give precise references for these equivalences, we start with the equivalence between assertions (b) and (d), that was elaborated by Long in \cite{Long1990}, and was extended afterwards to general and different contexts by many scholars. For details, we refer the reader to the works of Artzner, Becherer, B\"uhlmann, Christensen, Karatzas, Kardaras, Korn, Larsen, Long, Platen, Sch\"al, and Sass (see \cite{Artzner}, \cite{DirkBecherer}, \cite{KaratzasKardaras2007}, \cite{korn2003}, \cite{ChristensenLarsen2007}, \cite{SassandSchal2012}, and the references therein). For this equivalence (i.e. equivalence between (b) and (d)), that is developed nowadays in full generality, our contribution lies in providing an easy proof.\\

\noindent The equivalence among (a), (b) and (c), to which our paper brings new ideas and original contribution, is termed in the financial literature as the Fundamental Theorem of Asset Pricing (FTAP hereafter) by Dybvig and Ross (see Theorems 1 and 2 of \cite{Dybvigross2003}). It is worth mentioning that this result, that goes back to Arrow and Debreu for discrete markets (see \cite{arrowdebreu} and \cite{duffine2008}), fails for the continuous-time setting and for the discrete-time case with infinite horizon as well. In mathematical finance the FTAP stands for the equivalence between (b) and (c), and for the rest of the paper this meaning will be adopted. This equivalence between (b) and (c) goes back to Kreps in \cite{kreps1981}, Harrison--Pliska in \cite{HarrisonPlisa}, and Dalang--Morton--Willinger in \cite{DalangMortonWillinger}. To obtain an analogous equivalence in the most general framework, Delbaen and Schachermayer had to strengthen the non-arbitrage condition (by considering NFLVR) while weakening the EMM (by considering $\sigma$-martingale measures). Their approach established the very general version of the FTAP in their seminal works \cite{DelbeanSchachermayer} and \cite{DelbeanSchachermayerBis}. The FTAP has been extended very successfully to examine markets with proportional transaction costs. The advancements in this direction stemming from the works of Guasoni, Jouini/Kallal, Kabanov, Rasonyi, and Schachermayer (see \cite{Kabanov}, \cite{JouiniKallal}, and \cite{Guasonirasonyischa2010} and the references therein) play a foundational r\^ole in the literature of mathematical finance.\\

\noindent The equivalence between (a) and (b) in discrete-time for smooth utilities was proved by \cite{Kallsen2001} and \cite{RasonyiStettner}. The utility maximization problem has been intensively investigated, under the assumption that (c) holds. This condition allows authors to use the two rich machineries of martingale theory and convex duality. These works can be traced back to \cite{karatzasshere1991}, \cite{frittelli2000}, \cite{Delbanen2002} and \cite{KramkovSchachermayer}, and the references therein to cite few. The main results in this literature focus on finding assumptions on the utility function for which duality can hold, and/or the solutions to the primal problem and its dual problem will exist.\\

\noindent The question of how the existence of optimal portfolio is connected to the absence of arbitrage (weak or strong form), in the continuous-time context, is treated nowhere up to now and up to our knowledge. Recently, Frittelli proposed in \cite{fritteli2007} an interesting approach for this issue, while his obtained results are not applicable in the context of \cite{Lowenstein2000} and \cite{ruf}.\\

\noindent  In this paper, we elaborate the equivalence among all four assertions of Theorem \ref{finanacialEconomicalTheorem} for the most general continuous-time framework under no assumption by choosing adequate notions and formulations. In particular, we prove that the NUPBR holds if and only if the optimal portfolio exists under an equivalent probability measure for any ``nice" utility and any positive initial capital. This main result together with other equivalent statements are detailed in Section 2, and are based on a technical lemma that is important in itself. This lemma closes the existing gap in the tight connection between (a) and (b) without changing the underlying probability measure. The proof and an extension of this lemma are given in Section 3.

%------------------------------------------------------------------------------------------------------------------------------------------
%-----------------------------------------------------------------------------------------------------------------------------------------
%-----------------------------------------------------------------------------------------------------------------------------------------

%%%%%%%%%%%%%%%%%%%%%%%%%%%%%%%%%%%%%%%%%%%%%%%%%%%%%%%%%%%%%%%%%%%%%%%%%%%%%%%%%%%%%%%%
%%%%%%
%%%%%%%%%%%%%%%%%%%%%%%%%%%%%%%%%%%%%%%%%%%%%%%%%%%%%%%%%%%%%%%%%%%%%%%%%%%%%%%%%%%%%%%%%%%%%%%%%%%%%%%

%-----------------------------------%-----------------------------------%-----------------------------------%-----------------------------------
%-----------------------------------%-----------------------------------%-----------------------------------%-----------------------------------
%-----------------------------------%-----------------------------------%-----------------------------------%-----------------------------------
%-----------------------------------%-----------------------------------%-----------------------------------%-----------------------------------
\section{NUPBR, Weak Viability and Num\'eraire Portfolio}

 This section represents the core of the paper. In order to elaborate our main results, we start with describing the mathematical framework and formalizing mathematically the economic concepts used throughout the paper.  Our mathematical model is based on a filtered probability space $(\Omega, {\cal F }, \mathbb F, P)$, where the filtration, $\mathbb F:=({\cal F}_t)_{0\leq t \leq T}$, satisfies the usual conditions of right continuity and completeness. Here, $T$ is a finite and deterministic time horizon, and ${\cal F}_0$ is the completion of the trivial $\sigma$-field. On this stochastic basis, we consider a $d$-dimensional semi-martingale $(S_t)_{0\leq t\leq T},$ that represents the discounted price of $d$ risky assets. The space of martingales will be denoted by ${\cal M}(P)$, and the set of predictable processes that are $S$-integrable will be denoted by $L(S)$. The set ${\cal A}^+(Q)$ denotes the set of nondecreasing,
right-continuous with left-limits (RCLL hereafter), adapted and $Q$-integrable processes.\\

\noindent  The admissibility for strategies that will be used throughout the paper is given in the following.

\begin{definition}\label{admissibilityOfintegrands}
 Let $H=(H_t)_{0\leq t\leq T}$ be a predictable process.\\
 (i) For any positive constant $\alpha$, $H$ is called $\alpha$-admissible if $H$ is $S$-integrable and $(H\cdot S)_t \geq -\alpha,\ P-a.s$ for any $ t \in [0,T].$ \\
(ii) We say $H$ is admissible if there exists a positive constant $\alpha$ such that  $H$ is $\alpha$-admissible.
 \end{definition}
For any $x>0$, we define the set of wealth processes obtained from admissible strategies with initial capital $x$ by
\begin{equation}\label{admissiblewealth}
{\cal X}(x):=\left\{ X\geq 0\ \big|\ \mbox{there exists}\ H\in L(S)\ \mbox{such that}\ X=x+H\cdot S\ \right\}.\end{equation}

\begin{definition}\label{BK}
{\rm{(a)}} $S$ satisfies the NUPBR condition if the set
${\cal X}_T(1)$ is bounded in probability, where ${\cal X}_T(1)$ is the set of terminal values of elements of ${\cal X}(1)$.\\
{\rm{(b)}} $S$ satisfies non-arbitrage if
\begin{equation}\label{K(x)}
\{X_T\ \ \ \big|\ \ \ X\in \cup_{x>0}\left({\cal X}(x)-x\right)\}\cap L^0_+(P)=\{0\}.
\end{equation}
\end{definition}

\begin{definition}\label{sigmaDensity}
A $\sigma$-martingale density for $S$ is any positive local martingale, $Z$, such that there exists a real-valued predictable process $\phi$ satisfying $0<\phi\leq 1$ and $Z(\phi\cdot S)$ is a local martingale. The set of $\sigma$-martingale densities for $S$ will be denoted by
\begin{equation}\label{sigmamartingale}
{\cal Z}_{loc}(S):=\{Z\in{\cal M}_{loc}(P)\ \Big|\ Z_0=1,\ \ Z>0,\ \  ZS\ \mbox{is a $\sigma$-martingale}\ \}.
\end{equation}
\end{definition}

\begin{remark}\label{Remark1}
%{\bf Remark:}
1) For any $Z\in {\cal Z}_{loc}(S)$ and any $H\in L(S)$ such that $H\cdot S\geq -\alpha$ $(\alpha\in{\mathbb R}^+)$,
 the process $Z(H\cdot S)$ is a local martingale. This follows immediately from Proposition 3.3  and
 Corollary 3.5 of \cite{anselstricker1994}.\\
 2) When the constant process one belongs to ${\cal Z}_{loc}(S)$, $S$ is called a $\sigma$-martingale. The notion of $\sigma$-martingale goes back to Chou \cite{chou1977} (see also \cite{emery}). It results naturally when we integrate --in the semimartingale sense-- an unbounded and predictable process with respect to a local martingale. The difference between $\sigma$-martingale and local martingale is discussed in \cite{anselstricker1994}.
\end{remark}

\noindent It is known from the literature that both concepts of viability and num\'eraire portfolio are involved with utility functions. Since the definition of utility is vague, herein we will work with ``nice" von Neumann-Morgenstern utilities. Below, we precise the mathematical definition of the utility and the corresponding admissible set of strategies afterwards.

\begin{definition}\label{Utility} A utility function is a function $U$ satisfying the following:\\
 (a) $U$ is continuously differentiable, strictly increasing, and strictly concave on its effective domain $\mbox{dom}(U)$.\\
 (b) There exists $u_0\in [-\infty,0]$ such that $\mbox{dom}(U)\subset (u_0,+\infty)$.\\
  The effective domain $\mbox{dom}(U)$ is the set of $r\in{\mathbb R}$ satisfying $U(r)>-\infty$.
\end{definition}
Given a utility function $U$, a semimartingale $X$, and a probability $Q$,  we define the set of admissible portfolios as follows
\begin{equation}\label{admissiblesetforU}\begin{array}{lll}
{\cal A}_{adm}(\alpha,U,X, Q) := \\ \left \{H\  |\ \  H \in L(X),\  H\cdot X \geq -\alpha\ \ \&\ \  E^Q\Bigl[U^{-}(\alpha + (H \cdot X)_T)\Bigr] < +\infty\right \}.\end{array}\end{equation}
When $Q=P$, $X=S$, and $U$ is fixed, we simply denote ${\cal A}_{adm}(\alpha,S)$.\\

\noindent Throughout this section, we will focus on utility functions $U$ satisfying
\begin{equation}\label{InadaConditiuons}
 \mbox{dom}(U)=(0,+\infty),\ \ U'(0) = +\infty,\ \ U'(\infty) = 0, \ \&\ \displaystyle\limsup_{x\rightarrow \infty}{{xU^{'}(x)}\over{U(x)}} < 1.
\end{equation}
These utilities were termed by ``nice" von Neumann-Morgenstern utilities in the abstract and the introduction. After recalling the mathematical definition of the num\'eraire portfolio, we will state our principal theorem of the paper, and will discuss its novelties by comparing it to the existing literature. Afterwards, we will provide its proof and other related technical results.

\begin{definition} Let $Q$ be a probability measure.
A process ${\widetilde X} \in {\cal X}(x_0)$ is called a num\'eraire portfolio under $Q$ if ${\widetilde X}>0$ and for every $X \in {\cal X}(x_0)$, the relative wealth process $X/{\widetilde X}$ is a $Q$-supermartingale .\\ If $Q=P$, then ${\widetilde X}$ is simply called the num\'eraire portfolio.
\end{definition}

\begin{remark} It is worth mentioning that our definition of num\'eraire portfolio, by abuse of terminology, assigns this terminology to the wealth process rather than the investment strategy generating this wealth as in \cite{KaratzasKardaras2007} on the one hand. On the other hand, herein, we do not assume the positivity of $\widetilde X_{-}$ as is the case of \cite{KaratzasKardaras2007}. However, this is not an extension in any way since this condition is always fulfilled under our definition above. In fact, for any $T\geq t>0$, thanks to Fatou's lemma, we have $E(1/{\widetilde X}_{t-})\leq x_0<+\infty$, and the positivity of ${\widetilde X}_{t-}$ follows.
\end{remark}

\subsection{The Main Result and Its Interpretations}
\noindent Below, we state the principal result of the paper.

\begin{theorem}\label{NumeraireLogarithm}
The following properties are equivalent:\\
(i) $S$ satisfies the NUPBR condition.\\
(ii) The set ${\cal Z}_{loc}(S)$ (defined in (\ref{sigmamartingale})) is not empty.\\
(iii) There exists a probability $Q\sim P$, such that for any utility $U$ satisfying (\ref{InadaConditiuons}) and any $x\in \mbox{dom}(U)$, there exists $\widehat\theta\in {\cal A}_{adm}(x,U,S,Q)$ such that
\begin{equation}\label{Umaximixation1}
\max_{\theta \in  {\cal A}_{adm}(x,U,S, Q)} E^Q U\Bigl(x + (\theta\cdot S)_{T}\Bigr) = E^QU\left(x + (\widehat{\theta}\cdot S)_{T}\right)< +\infty.
\end{equation}
(iv) For any $\epsilon >0$, there exists ${\widehat Q}({\epsilon})\sim P$ such that $E\vert {{d{\widehat Q}({\epsilon})}\over{dP}}-1\vert\leq \epsilon$, and for any utility $U$ satisfying (\ref{InadaConditiuons}) and any $x\in \mbox{dom}(U)$, there exists ${\widehat\theta}_{\epsilon}\in {\cal A}_{\epsilon,x}(U):={\cal A}_{adm}(x,U,S,{\widehat Q}({\epsilon}))$ such that
\begin{equation}\label{Umaximixation2}
\max_{\theta \in  {\cal A}_{\epsilon,x}(U)} E^{{\widehat Q}({\epsilon})} U\Bigl(x + (\theta\cdot S)_{T}\Bigr) = E^{{\widehat Q}({\epsilon})}U\left(x + ({\widehat\theta}_{\epsilon}\cdot S)_{T}\right)< +\infty.
\end{equation}
(v) For any $\epsilon\in (0,1)$, there exist ${\widetilde Q}({\epsilon})\sim P$ and ${\widetilde\theta}_{\epsilon}\in {\cal A}_{\epsilon,1}:={\cal A}_{adm}(1,\log, S, {\widetilde Q}({\epsilon}))$ such that  $E\vert {{d{\widetilde Q}({\epsilon})}\over{dP}}-1\vert\leq \epsilon$ and  \begin{equation}\label{logmaximixationattaun}
\max_{\theta \in {\cal A}_{\epsilon,1}} E^{{\widetilde Q}({\epsilon})}\log\Bigl(1 + (\theta\cdot S)_{T}\Bigr) = E^{{\widetilde Q}({\epsilon})}\log\left(1 + ({\widetilde\theta}_{\epsilon}\cdot S)_{T}\right) < +\infty.\end{equation}
(vi) The num\'eraire portfolio exists.
\end{theorem}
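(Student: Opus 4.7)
The plan is to close a ring of implications: (i) $\Leftrightarrow$ (ii), (ii) $\Rightarrow$ (iv) $\Rightarrow$ (iii) $\Rightarrow$ (i), the specialization (iv) $\Rightarrow$ (v), and the equivalence (ii) $\Leftrightarrow$ (vi). The block (i) $\Leftrightarrow$ (ii) $\Leftrightarrow$ (vi) is by now well established: (i) $\Rightarrow$ (ii) is obtained through a Kreps--Yan type separation on $L^0$ together with a suitable localization, while (ii) $\Rightarrow$ (i) uses Remark \ref{Remark1} --- any $Z \in \mathcal{Z}_{loc}(S)$ turns each admissible integral $H \cdot S$ into a $P$-supermartingale bounded below by $-\alpha$, forcing $\mathcal{X}_T(1)$ to be bounded in probability. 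The equivalence (ii) $\Leftrightarrow$ (vi) is obtained via log-optimization over $\mathcal{X}(x_0)$, with $Z$ securing the supermartingale property of $X/\widetilde{X}$.

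The easy new implications are (iv) $\Rightarrow$ (iii) (fix a single $\epsilon$) and (iv) $\Rightarrow$ (v) (since $\log$ satisfies (\ref{InadaConditiuons}), (v) is the specialization of (iv) to $U=\log$ and $x=1$). For (iii) $\Rightarrow$ (i), I would invoke the first-order optimality condition at the $Q$-optimum $\widehat{\theta}$: the process obtained by conditioning $U'(x+(\widehat{\theta} \cdot S)_T)$ under $Q$ and normalizing should furnish a $\sigma$-martingale density for $S$ under $Q$, so NUPBR holds under $Q$; since NUPBR is invariant under equivalent changes of measure, NUPBR holds under $P$.

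The heart of the argument, and its main obstacle, is (ii) $\Rightarrow$ (iv): from $Z \in \mathcal{Z}_{loc}(S)$, build an equivalent probability $\widehat{Q}(\epsilon)$ with $d\widehat{Q}(\epsilon)/dP$ in the $\epsilon$-ball around $1$ in $L^1(P)$ and under which the utility maximization problem admits an attained finite optimum for every $U$ satisfying (\ref{InadaConditiuons}) and every $x \in \mbox{dom}(U)$. The plan is twofold: first, localize $Z$ via stopping times $\tau_n \uparrow T$ so that $Z^{\tau_n}$ is a bounded true martingale, yielding an equivalent measure $Q_n$ via $dQ_n/dP := Z_{\tau_n}$ under which the stopped economy $S^{\tau_n}$ is a $\sigma$-martingale, so Kramkov--Schachermayer duality supplies a finite attained optimum on the stopped market; second, interpolate $\widehat{Q}(\epsilon)$ as the convex combination $\lambda Q_n + (1-\lambda) P$ with $\lambda$ small enough that $\lambda E|Z_{\tau_n}-1| \leq \epsilon$, bringing the density into the prescribed $L^1$-neighborhood of $1$ while preserving equivalence with $P$.

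The difficulty is that under the interpolated measure $\widehat{Q}(\epsilon)$ the price $S$ need not enjoy a direct martingale-like property on the full horizon, so the standard duality machinery does not transfer automatically; the attainment under $\widehat{Q}(\epsilon)$ must be extracted from the attainment under the stopped problems together with a careful passage $n \to \infty$, $\tau_n \to T$. This is precisely where the technical lemma of Section 3 --- advertised by the authors as closing the gap between viability and non-arbitrage without leaving the reference probability --- is indispensable: under NUPBR alone, it should furnish a ``local'' optimizer whose integrability is controlled through the elasticity bound in (\ref{InadaConditiuons}). I expect the hardest step to be certifying that the limiting candidate is admissible under $\widehat{Q}(\epsilon)$ and truly realizes the supremum rather than merely bounds it below; the elasticity assumption, the smallness of $\lambda$, and a Fatou-type upper semicontinuity of the $\widehat{Q}(\epsilon)$-expected utility will have to collaborate delicately at this juncture.
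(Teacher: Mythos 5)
Your global architecture (a ring of implications with the construction of the perturbed measures as the heart) matches the paper's, and several pieces are fine: (iv)$\Rightarrow$(iii) and (iv)$\Rightarrow$(v) are indeed trivial specializations, and (ii)$\Rightarrow$(i) via the supermartingale deflator is the standard argument. But the central step --- building $\widehat Q(\epsilon)$ --- has a genuine gap. Your interpolation $\widehat Q(\epsilon)=\lambda Q_n+(1-\lambda)P$ does not control the value function: the expected utility splits as $\lambda E^{Q_n}U+(1-\lambda)E^PU$, and under $P$ the supremum of $E^PU(x+(\theta\cdot S)_T)$ over admissible $\theta$ can be $+\infty$ even when NUPBR holds (this is exactly why Hulley--Schweizer need the extra hypothesis (\ref{assumpionHally}), and why the Christensen--Larsen example in Remark \ref{LarsenExample} has no log-optimizer under $P$). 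The paper's construction is different and this is the key point you miss: it takes $q:=Z_T/E[Z_T]$ and the density $q_\delta:=q/(\delta+q)$ normalized. This density is bounded, converges to $1$ a.s.\ (hence in $L^1$ by dominated convergence), \emph{and} is dominated by $q/\delta$, so that $E^{Q_\delta}[x+(\theta\cdot S)_T]\leq x/(\delta E(q_\delta)E[Z_T])$ by the supermartingale property of $Z(x+\theta\cdot S)$; concavity of $U$ then bounds the value function. A convex combination with $P$ destroys precisely this domination by a multiple of $Z_T$. Once finiteness of the value function is secured, attainment is obtained not by Kramkov--Schachermayer duality on stopped markets plus a passage $n\to\infty$ of measures, but by Lemma \ref{positivewealththeorem1} (local optimizers under the \emph{fixed} measure $Q_\delta$) followed by a Koml\'os/Fatou passage on convex combinations of the local optimizers (Lemma \ref{ChoulliCorollary}), with admissibility and optimality of the limit certified through the uniform integrability supplied by the elasticity condition.

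Two further steps are under-argued. For (iii)$\Rightarrow$(i), your first-order-condition route is circular: extracting a $\sigma$-martingale density from $U'(x+(\widehat\theta\cdot S)_T)$ is the conclusion of the duality theory, which itself presupposes NUPBR/NFLVR. The paper's argument is elementary and goes the other way: if NUPBR failed, an unbounded-in-probability sequence of $1$-admissible outcomes could be mixed (via Koml\'os) into the optimizer to strictly increase expected utility, contradicting optimality; this is the content of Lemma \ref{positivewealththeorem1}, (i)$\Rightarrow$(ii). For (ii)$\Leftrightarrow$(vi), ``log-optimization over ${\cal X}(x_0)$'' under $P$ cannot work for the reason above: the num\'eraire portfolio exists in situations where the log-utility problem has no solution under $P$. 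The paper's (new) proof of (vi) goes through (v): it takes the num\'eraire portfolios $W^n$ under the perturbed measures $Q_n$, forms Fatou limits of convex combinations, and verifies the defining supermartingale inequality under $P$ by a truncation argument involving $E(q_n|{\cal F}_{\tau_k})$ together with Lemma \ref{supermartingaleIntegral}. You should either reproduce an argument of that type or invoke Karatzas--Kardaras explicitly rather than assert the equivalence as routine.
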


\noindent It is natural to ask how far and in which directions this theorem can be extended. Below, in the following remark, we will discuss two situations.

\begin{remark}\label{LarsenExample}
(a) It is important to mention that, in general, the NUPBR condition or equivalently the existence of the num\'eraire portfolio  does not guarantee the existence of the optimal portfolio under the real-world probability measure $P$. In fact in \cite{ChristensenLarsen2007}, the authors provide a model (Example 4.3 ) satisfying the NUPBR, while the log-utility maximization has no solution under $P$.\\
(b) Due to the importance of the exponential utility, it is quite natural to ask if we could extend Theorem \ref{NumeraireLogarithm} to the class of utilities with dom$(U)=\mathbb R$ (the case of real-valued wealth processes). We believe that the answer to this question is positive, while we prefer to keep our theorem in this form. Our main reason for this choice lies in the fact that extending Theorem \ref{NumeraireLogarithm} will certainly add technical complexity in the formulation itself. This will make our result difficult to interpret/understand and the key ideas of the theorem will be completely buried with technical conditions.
\end{remark}

\noindent In the remaining part of this subsection, we will discuss the original contribution of the theorem, its economic and financial interpretations, and its connection to the existing literature. The true novelty of this theorem lies in the equivalence among assertions (i), (iii), (iv), and (v). Before discussing the meanings of this innovation, we will first argue about the role of the remaining parts (i.e. (ii)$\Longleftrightarrow$(i)$\Longleftrightarrow$(vi)) that already exist in the literature.

\begin{remark}\label{partsNotNew} (a) The equivalence between (i) and (ii) is exactly Takaoka's result (see Theorem 2.6 in \cite{TAKAOKA}) on which our proofs rely heavily on the one hand. On the other hand ---for the reader's convenience--- by adding assertions (ii) and (vi), one can clearly see how Theorem \ref{finanacialEconomicalTheorem} becomes in the general continuous-time context under no assumption.\\
(b) The equivalence between (i) and (vi) was established for the first time ---up to our knowledge-- in Theorem 4.12 of \cite{KaratzasKardaras2007} (see also \cite{DirkBecherer}  and \cite{ChristensenLarsen2007}).  Our contribution here lies in the methodology used to prove this equivalence (see part 3) in Subsection \ref{SubsectionProofofmainTheorem}). It is worth mentioning that the proof for this equivalence in \cite{KaratzasKardaras2007} uses the semimartingale characteristics and the measurable selection theorem that are very powerful tools but not easy to handle. In contrast to \cite{KaratzasKardaras2007}, our approach uses Kolmos' argument, Fatou's lemma and the properties of the utility function only. Furthermore, our method constitutes an application of our original contribution (i)$\Longleftrightarrow$(v) ---which relies also on standard techniques---, and shows how to approximate the num\'eraire portfolio.
\end{remark}

 \noindent Theorem \ref{NumeraireLogarithm} can be interpreted from the financial/economic side and the mathematical finance side. Below, we will detail these two views.

 \begin{remark}
 a) From the mathematical finance perspective, our theorem suggests an alternative to the approaches of Becherer and  Christensen/Larsen (see \cite{DirkBecherer}  and \cite{ChristensenLarsen2007} and the references therein). In these works, the authors connected assertions (i) and (vi) to the existence of growth-optimal portfolio and to the existence of the solution to the log-utility maximization. A summary of these results is given by Hulley and Schweizer (see Theorem 2.3. of \cite{HardyHulleyMartinSchweizer2007}), where the authors stated that the assertions (i), (vi), and\\

\noindent (vii)\hskip .5cm  The growth-optimal portfolio $X^{go}$ exists,\\

\noindent are equivalent. If furthermore \begin{equation}\label{assumpionHally}
\sup\Bigl\{\ E\Bigl[\log X_T\Bigr]\  \Big|\ X\in {\cal X}(1),\ \ X_{-}>0,\ \mbox{and}\  E\left[(\log X_T)^{-}\right]< \infty\Bigr\} < \infty\end{equation}
holds, then assertions (i), (vi) and (vii) are also equivalent to:
\vskip 0.35cm

\noindent (viii)\hskip .5cm  The log-utility maximization problem admits a solution.\\

\noindent In Theorem \ref{NumeraireLogarithm}, we propose a new formulation for which the equivalence among the above four properties holds without any assumption and for any utility satisfying (\ref{InadaConditiuons}) --not only the log utility--. This formulation uses the appropriate change of probability. More importantly, the set of equivalent probabilities ---under which utilities satisfying (\ref{InadaConditiuons}) admit optimal portfolios--- is variation-dense. It is well known that the change of probability measure is a powerful probabilistic technique used in stochastic calculus to overcome integrability difficulties. Thus, mathematically speaking, the change of probability in Theorem \ref{NumeraireLogarithm} is a natural and adequate formulation that allowed us to establish the exact connection between the viability and the NUPBR under no assumption ---such as (\ref{assumpionHally})--- on the model. As mentioned in Remark \ref{LarsenExample}, in general, there is no hope for the existence of the optimal portfolio (even for the log utility) under the probability $P$. A curious reader can naturally ask what is the economical meaning of this probability change? To answer this question, we recall that in financial economics scholars called probability measures by agents' subjective believes. In this literature, the change of probability measures/believes has been well received and adopted since a while. The robust/uncertainty models and the random utility theory are among the successful areas of economics in which the change of probability is central. In this spirit of random utility theory, our assertion (iii) says that the market's viability is achieved by a random field utility for which (\ref{InadaConditiuons}) is fulfilled path-wisely. In mathematical terms, assertion (iii) is equivalent to\\
(iii') There exists a random field utility $\widetilde U(\omega,x)$ and a $\widetilde\theta\in{\cal A}_{adm}(x,\widetilde U)$ such that $\widetilde U(\omega,.)$ is a utility fulfilling (\ref{InadaConditiuons}) and
$$
\max_{\theta\in {\cal A}_{adm}(x,\widetilde U)}E\widetilde U\Bigl(x+(\theta\cdot S)_T\Bigr)=E\widetilde U\left(x+(\widetilde\theta\cdot S)_T\right).$$
For other situations, where the change of probability is economically motivated, and for the random utility theory literature, we refer the reader to \cite{choullima2013} and the references therein.\\
b) From the financial/economic view, our theorem is a generalization of Theorem \ref{finanacialEconomicalTheorem} to the most complex market model under no assumption. In fact, by substituting the viability under an equivalent belief and the NUPBR to assertions (a) and (b) of Theorem \ref{finanacialEconomicalTheorem} respectively, we obtained similar important result for continuous-time framework. Furthermore, our statement (iii) claims that any agent whose preference fulfills (\ref{InadaConditiuons}) can find optimal net trade under the same equivalent belief. This belief can be chosen as close to the real-world belief as we want (but might not be equal). This enhances our economic interpretation of the statement (iii) given by the following. \end{remark}

  \begin{definition}\label{weakviability}A market is weakly viable when there exist an agent ---whose utility fulfills (\ref{InadaConditiuons})--- and an initial capital for which the corresponding optimal portfolio exists under an equivalent probability measure. \end{definition}

\subsection{Proof of Theorem \ref{NumeraireLogarithm}}\label{SubsectionProofofmainTheorem}
\noindent The proof of this theorem is based essentially on three lemmas that we start with. The first lemma is dealing with the Fatou convergence of processes that was defined in Definition 5.2 of \cite{follmerandkramkoc1997}, while the second lemma deals with a supermartingale property. The third Lemma is the most important and innovative result among these three technical lemmas.

\begin{lemma}\label{ChoulliCorollary}
Suppose that ${\cal Z}_{loc}(S)\not=\emptyset$. Let $(\theta_n)_{n\geq 1}$ be such that $\theta_n\in L(S)$ and $\theta_n\cdot S\geq -1$. Then there exist $\phi_n\in \mbox{conv}(\theta_k,\ k\geq n)$ and $\widehat\theta\in L(S)$ and a nondecreasing process $C$ such that $\widehat\theta\cdot S\geq -1$, $C_0=0$, and
\begin{equation}\label{OD}
 1+\phi_n\cdot S\ \ \ \mbox{is Fatou convergent to}\ \ \  1+\widehat\theta\cdot S-C.\end{equation}
\end{lemma}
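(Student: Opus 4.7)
The plan is to combine Föllmer--Kramkov's Fatou-convergence argument for nonnegative supermartingales with an optional-decomposition step. Pick any $Z \in \mathcal{Z}_{loc}(S)$, which is available by hypothesis. For every $n$, $X^n := 1 + \theta_n \cdot S$ is nonnegative, and
$$Z X^n \;=\; Z \;+\; Z\,(\theta_n \cdot S),$$
where on the right-hand side $Z(\theta_n \cdot S)$ denotes the pointwise product of $Z$ with the semimartingale $\theta_n \cdot S$. By Remark \ref{Remark1}(1) applied to $H=\theta_n$, this pointwise product is a local martingale, and so is $Z$; hence $V^n := Z X^n$ is a nonnegative local martingale and therefore a supermartingale with $V^n_0 = 1$, and the sequence $(V^n)_n$ is uniformly bounded in $L^1$ by $1$.

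First I would apply the Fatou-convergence lemma of Föllmer and Kramkov (Lemma~5.2 in \cite{follmerandkramkoc1997}) to the family of nonnegative supermartingales $(V^n)_n$. This extracts convex combinations $W^n \in \mbox{conv}(V^k,\, k \geq n)$ that are Fatou-convergent, along some countable dense subset $D \subset [0,T]$ with $T \in D$, to a nonnegative RCLL supermartingale $V$. Linearity of the map $\theta \mapsto Z(1 + \theta \cdot S)$ together with the fact that the convex weights sum to one yield $W^n = Z(1 + \phi_n \cdot S)$ for some $\phi_n \in \mbox{conv}(\theta_k,\, k \geq n)$, and consequently $\phi_n \cdot S \geq -1$. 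Because $Z$ is strictly positive and RCLL, the quotients $1 + \phi_n \cdot S = W^n/Z$ Fatou-converge (along the same set $D$) to the nonnegative RCLL process $Y := V/Z$, which starts from $Y_0 = 1$.

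To conclude, I would check that $Z'\,Y$ is a supermartingale for every $Z' \in \mathcal{Z}_{loc}(S)$---this follows from Fatou's lemma applied to the already-established fact that each $Z'(1 + \phi_n \cdot S)$ is a nonnegative supermartingale via the very same Remark \ref{Remark1}(1) argument applied to $\phi_n$---and then invoke the optional decomposition theorem in the NUPBR/$\sigma$-martingale-density framework to obtain $\widehat\theta \in L(S)$ and a nondecreasing, RCLL, adapted process $C$ with $C_0 = 0$ such that $Y = 1 + \widehat\theta \cdot S - C$. Nonnegativity of $Y$ and $C$ yield $\widehat\theta \cdot S = Y - 1 + C \geq -1$, so $1 + \phi_n \cdot S$ Fatou-converges to $1 + \widehat\theta \cdot S - C$ as required. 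The main obstacle is precisely this last step: the classical optional decomposition theorem is stated for supermartingales under an equivalent $\sigma$-martingale \emph{measure}, whereas here only a $\sigma$-martingale \emph{density} is available, so one has to appeal to a density-based version of the optional decomposition compatible with Takaoka's framework \cite{TAKAOKA}. A secondary subtlety is ensuring that the Komlos-type extraction preserves, at every step, the representation of the intermediate wealth processes in the form $Z(1+\phi_n \cdot S)$ with the $\phi_n$ genuine convex combinations of the original $\theta_k$'s; this is routine thanks to the linearity of the stochastic integral, but needs to be tracked.
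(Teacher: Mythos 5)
Your argument is essentially the paper's proof, which simply combines Lemma 5.2 of \cite{follmerandkramkoc1997} (the Fatou-convergence extraction for the nonnegative supermartingales $Z(1+\theta_n\cdot S)$) with Theorem 2.1 of \cite{strickeryan1998}. The ``main obstacle'' you flag --- an optional decomposition valid under the mere existence of a ($\sigma$-)martingale \emph{density} rather than an equivalent martingale \emph{measure} --- is precisely what Stricker--Yan's Theorem 2.1 supplies, so the step you were unsure about is exactly the second ingredient the paper cites.
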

\begin{proof} The proof of the lemma follows immediately from combining Lemma 5.2 of \cite{follmerandkramkoc1997} and Theorem 2.1 of \cite{strickeryan1998}.
\end{proof}
%%%%%%%%%%%%%%%%%%%%%%%%%%%%%%%%%%%%%%%%%%%%%%%%%%%%%%%%%%%%%%%%%%%%%%%%%%%%%%%%%%%%%%%%%%%%%%%%%%%%
%%%%%%%%%%%%%%%%%%%%%%%%%%%%%%%%%%%%%%%%%%%%%%%%%%%%%%%%%%%%%%%%%%%%%%%%%%%%%%%%%%%%%%%%%%%%%%%%%%%
\begin{lemma}\label{supermartingaleIntegral}
Let $X$ be any RCLL semimartingale, and $\widetilde\pi\in L(X)$ such that ${\cal E}(\widetilde\pi\cdot X)>0$. Then, the following are equivalent:\\
(i) For any $\pi\in L(X)$ such that ${\cal E}(\pi\cdot X)\geq 0$, and any stopping time, $\tau$, we have
\begin{equation}\label{inequalitysupermartingale}
E\Bigl[{{{\cal E}(\pi\cdot X)_{\tau}}\over{{\cal E}(\widetilde\pi\cdot X)_{\tau}}}\Bigr]\leq 1.\end{equation}
(ii) For any $\pi\in L(X)$ such that ${\cal E}(\pi\cdot X)\geq 0$, the ratio ${\cal E}(\pi\cdot X)/{\cal E}(\widetilde\pi\cdot X)$ is a supermartingale.
\end{lemma}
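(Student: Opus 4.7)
The plan is to get (ii) $\Rightarrow$ (i) from optional sampling and (i) $\Rightarrow$ (ii) from a portfolio-switching construction that produces, for each ${\cal F}_\sigma$-event, a tailored test integrand to which hypothesis (i) can be applied.

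For (ii) $\Rightarrow$ (i): under (ii), $M:={\cal E}(\pi\cdot X)/{\cal E}(\widetilde\pi\cdot X)$ is a nonnegative supermartingale with $M_0=1$, so optional sampling at any stopping time $\tau$ gives $E[M_\tau]\leq 1$.

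For (i) $\Rightarrow$ (ii): I fix $\pi\in L(X)$ with $V:={\cal E}(\pi\cdot X)\geq 0$, set $\widetilde V:={\cal E}(\widetilde\pi\cdot X)$ and $M:=V/\widetilde V$, and take bounded stopping times $\sigma\leq\tau$. Applying (i) at $\tau$ already gives $M_\tau\in L^1$. Uniqueness for the linear SDE $dV=V_{-}\pi\,dX$ forces $V\equiv 0$ after its first zero, so $M_\tau=0$ on $\{M_\sigma=0\}$ and the supermartingale inequality is trivial there. For $A\in{\cal F}_\sigma$ with $A\subseteq\{M_\sigma>0\}$ I introduce the switched integrand
\[
\pi^*:=\widetilde\pi\,1_{[0,\sigma]}+\widetilde\pi\,1_{A^c}1_{(\sigma,T]}+\pi\,1_A 1_{(\sigma,T]}.
\]
Since $A\in{\cal F}_\sigma$, the processes $1_A 1_{(\sigma,T]}$ and $1_{A^c}1_{(\sigma,T]}$ are predictable, so $\pi^*\in L(X)$. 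By uniqueness of the stochastic-exponential SDE on $[\sigma,T]$ starting from ${\cal E}(\pi^*\cdot X)_\sigma=\widetilde V_\sigma$, one identifies, for $t\geq\sigma$,
\[
{\cal E}(\pi^*\cdot X)_t \;=\; 1_A\,\widetilde V_\sigma\,V_t/V_\sigma + 1_{A^c}\,\widetilde V_t \;\geq\; 0.
\]
Hypothesis (i) applied to $\pi^*$ at $\tau$ then yields
\[
1\;\geq\;E\!\left[\frac{{\cal E}(\pi^*\cdot X)_\tau}{\widetilde V_\tau}\right] \;=\; E\!\left[1_A\,\frac{M_\tau}{M_\sigma}\right]+P(A^c),
\]
i.e.\ $E[1_A M_\tau/M_\sigma]\leq E[1_A]$ for every such $A$. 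Since $1/M_\sigma$ is ${\cal F}_\sigma$-measurable on $\{M_\sigma>0\}$, this is exactly $E[M_\tau\,|\,{\cal F}_\sigma]\leq M_\sigma$ on $\{M_\sigma>0\}$, and combined with the trivial case on $\{M_\sigma=0\}$ it delivers the supermartingale property in (ii).

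The main obstacle is the pathwise identification of ${\cal E}(\pi^*\cdot X)$ on $A$ and on $A^c$: the ${\cal F}_\sigma$-measurable rescaling factor $\widetilde V_\sigma/V_\sigma$ is only well-defined on $\{V_\sigma>0\}$, so one must carve out the null set $\{V_\sigma=0\}$ and absorb it via the vanishing-after-first-zero property of stochastic exponentials. Once that identification is in place, the rest is just a change of test integrand in hypothesis (i), equivalent to Yor's product formula for stochastic exponentials together with the linearity of the underlying SDE.
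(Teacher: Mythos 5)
Your proposal is correct and follows essentially the same route as the paper: the switched integrand $\pi^*$ is exactly the paper's $\overline\pi=\widetilde\pi I_{\Rbrack 0,\tau_A\Rbrack}+\pi I_{\Rbrack\tau_A,+\infty\Lbrack}$, and the argument proceeds identically by identifying ${\cal E}(\pi^*\cdot X)$ pathwise and applying hypothesis (i) to it. Your explicit treatment of the set $\{V_\sigma=0\}$ via the vanishing-after-first-zero property is a welcome refinement of a point the paper's computation leaves implicit, but it does not change the method.
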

\begin{proof}
The proof of $(ii)\Longrightarrow (i)$ is obvious and will be omitted. Suppose that assertion $(i)$ holds, and consider $\pi\in L(X)$ such that ${\cal E}(\pi\cdot X)\geq 0$. Then, for any pair of stopping times, $\tau$ and $\sigma$, such that $\tau\leq \sigma\ P-a.s.$ and $A\in {\cal F}_{\tau}$, we put
$$
\overline\pi:=\widetilde\pi I_{\Rbrack 0,\tau_A\Rbrack}+\pi I_{\Rbrack\tau_A,+\infty\Lbrack},\ \ \ \tau_A:=\left\{\begin{array}{lll}\tau\ \ \ \ \ \ \ \ \ \mbox{on}\ A\\
+\infty\ \ \ \ \ \mbox{on}\ \ A^c\end{array}\right..$$
Then, we easily calculate

$${{{\cal E}(\overline\pi\cdot X)_{\sigma}}\over{{\cal E}(\widetilde\pi\cdot X)_{\sigma}}}={{{\cal E}(\widetilde\pi\cdot X)_{\tau}}\over{{\cal E}(\widetilde\pi\cdot X)_{\sigma}}}{{{\cal E}(\pi\cdot X)_{\sigma}}\over{{\cal E}(\pi\cdot X)_{\tau}}}I_A+I_{A^c}.
$$
Therefore, a direct application of (\ref{inequalitysupermartingale}) for $\overline\pi$ and $\sigma$, we obtain
$$
\displaystyle E\Bigl\{{{{\cal E}(\widetilde\pi\cdot X)_{\tau}}\over{{\cal E}(\widetilde\pi\cdot X)_{\sigma}}}{{{\cal E}(\pi\cdot X)_{\sigma}}\over{{\cal E}(\pi\cdot X)_{\tau}}}I_A\Bigr\}\leq P(A),$$
for any $A\in{\cal F}_{\tau}$. Hence, the supermartingale property for $\displaystyle{\cal E}(\pi\cdot X)\Bigl({\cal E}(\widetilde\pi\cdot X)\Bigr)^{-1}$ follows immediately, and the proof of the lemma is achieved.
\end{proof}

\begin{lemma}\label{positivewealththeorem1}
Let $U$ be a utility function satisfying (\ref{InadaConditiuons}). Suppose that there exists a sequence of stopping times $(T_n)_{n\geq 1} $ that increases stationarily to $T$ and $x_n > 0$ such that
\begin{equation}\label{finitenssofU1} \sup_{\theta \in  {\cal A}_{adm}(x_n, S^{T_n})} EU\Bigl(x_n + (\theta\cdot S)_{T_n}\Bigr) < +\infty, \ \ \ \ \ \  \forall\ n\geq 1. \end{equation}
Then, the following are equivalent:\\
(i) There exists a sequence of stopping times $(\tau_n)_{n\geq 1}$ that increases stationarily to $T$ such that for any $n\geq 1$ and any initial wealth  $x_0 >0$, there exists $\widehat{\theta}^{(n)} \in  {\cal A}_{adm}(x_0, S^{\tau_n})$  such that
\begin{equation} \max_{\theta \in  {\cal A}_{adm}(x_0, S^{\tau_n})} EU\Bigl(x_0 + (\theta\cdot S)_{\tau_n}\Bigr) = EU\Big(x_0 + (\widehat{\theta}^{(n)} \cdot S)_{\tau_n}\Bigr) < +\infty.  \end{equation}
(ii) $S$ satisfies the NUPBR condition.\\
\end{lemma}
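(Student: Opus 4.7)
The plan is to establish the two directions separately. Both rely on Takaoka's theorem (the equivalence of NUPBR with ${\cal Z}_{loc}(S)\not=\emptyset$, quoted in Remark \ref{partsNotNew}), combined with a localization of the $\sigma$-martingale density and a Kramkov--Schachermayer-type duality argument. Throughout, I would allow myself to refine the sequence $(\tau_n)$ by intersecting it with $(T_n)$ so that the finiteness assumption (\ref{finitenssofU1}) is always available.

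For the direction (ii)$\Longrightarrow$(i), I would start with a $\sigma$-martingale density $Z\in{\cal Z}_{loc}(S)$ provided by Takaoka's theorem. Since $Z$ is a positive local martingale, there exists a localizing sequence $(\sigma_n)$ increasing stationarily to $T$ such that $Z^{\sigma_n}$ is a uniformly integrable martingale; setting $dQ_n/dP:=Z_{\sigma_n}/E[Z_{\sigma_n}]$ defines a probability equivalent to $P$. Bayes' rule applied to the $\sigma$-martingale property of $ZS$ shows that $S^{\sigma_n}$ is a $Q_n$-$\sigma$-martingale, so NFLVR holds for $S^{\sigma_n}$ under $P$. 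Choosing $\tau_n:=\sigma_n\wedge T_n$, NFLVR is in force on $[0,\tau_n]$ and assumption (\ref{finitenssofU1}) supplies finiteness of the value function. The asymptotic elasticity bound in (\ref{InadaConditiuons}) combined with concavity propagates finiteness from $x_n$ to every $x_0\in(0,+\infty)$, and Kramkov--Schachermayer's existence theorem then yields $\widehat\theta^{(n)}$.

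For the direction (i)$\Longrightarrow$(ii), I would extract a supermartingale density from each local optimizer. Writing $\widehat X^{(n)}:=x_0+\widehat\theta^{(n)}\cdot S$, the Inada conditions $U'(0)=+\infty$ and $U'(\infty)=0$ combined with finiteness of the optimal expected utility force $\widehat X^{(n)}>0$ on the whole interval $[0,\tau_n]$. The first-order variational inequality, obtained by differentiating $EU\bigl(x_0+((1-\epsilon)\widehat\theta^{(n)}+\epsilon\theta)\cdot S_{\tau_n}\bigr)$ at $\epsilon=0$ for every admissible $\theta$, yields
$$E\bigl[U'(\widehat X^{(n)}_{\tau_n})\,\bigl((\theta-\widehat\theta^{(n)})\cdot S\bigr)_{\tau_n}\bigr]\leq 0.$$
Setting $Y^{(n)}_t:=E[U'(\widehat X^{(n)}_{\tau_n})\,|\,{\cal F}_t]$ and localizing the perturbation to events in ${\cal F}_\sigma$ for each stopping time $\sigma\leq\tau_n$, Lemma \ref{supermartingaleIntegral} produces that $Y^{(n)}(x_0+\theta\cdot S^{\tau_n})$ is a supermartingale for every admissible $\theta$; in particular $Y^{(n)}$ is a supermartingale deflator for $S^{\tau_n}$. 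A standard Ansel--Stricker/Kardaras argument upgrades this into an element of ${\cal Z}_{loc}(S^{\tau_n})$, and pasting in $n$ gives a global member of ${\cal Z}_{loc}(S)$, whence Takaoka's theorem delivers NUPBR.

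The hardest step I expect is the supermartingale-density construction in (i)$\Longrightarrow$(ii). The delicate points are ensuring strict positivity of $\widehat X^{(n)}$ on the entire interval $[0,\tau_n]$ and not only at its endpoint, constructing admissible perturbations that preserve the non-negativity constraint $x_0+((1-\epsilon)\widehat\theta^{(n)}+\epsilon\theta)\cdot S\geq 0$ at intermediate stopping times, and justifying the interchange of differentiation and expectation using the asymptotic elasticity bound in (\ref{InadaConditiuons}) to control $U'(\widehat X^{(n)}_{\tau_n})\widehat X^{(n)}_{\tau_n}$. Once these are handled, Lemma \ref{supermartingaleIntegral} provides the clean mechanism to convert the pointwise variational inequality into the desired global supermartingale property.
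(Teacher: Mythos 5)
Your direction (ii)$\Longrightarrow$(i) is essentially the paper's: a $\sigma$-martingale density from Takaoka's theorem \cite{TAKAOKA}, a localizing sequence $\sigma_n$ making $Z^{\sigma_n}$ and $Z^{\sigma_n}S^{\sigma_n}$ martingales, $\tau_n:=\sigma_n\wedge T_n$ so that (\ref{finitenssofU1}) bounds the value function, and then Theorems 2.1--2.2 of \cite{KramkovSchachermayer}. That half is fine.

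The direction (i)$\Longrightarrow$(ii) is where you depart from the paper, and your route has two genuine gaps. First, the strict positivity of $\widehat X^{(n)}$ on all of $[0,\tau_n]$ does not follow from the Inada conditions: $U'(0)=+\infty$ together with optimality only constrains the \emph{terminal} value $\widehat X^{(n)}_{\tau_n}$, and before you have produced a deflator you cannot invoke the usual ``once a nonnegative wealth process hits zero it stays there'' argument. Since your admissible perturbations at intermediate stopping times require exactly this positivity, the construction of the variational inequality localized to ${\cal F}_\sigma$ is not justified as written. Second, the claimed ``standard Ansel--Stricker/Kardaras upgrade'' of a supermartingale deflator into an element of ${\cal Z}_{loc}(S^{\tau_n})$ is not a routine argument --- that upgrade is essentially the content of Takaoka's theorem itself (equivalently of \cite{Kardaras2012}), so as written the step is circular. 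It is repairable: a supermartingale deflator for $S^{\tau_n}$ already gives boundedness in probability of ${\cal X}_{\tau_n}(1)$ by Chebyshev (exactly as in Step 3 of the proof of Theorem \ref{NumeraireLogarithm}), hence NUPBR for $S^{\tau_n}$, and then Takaoka's localization gives global NUPBR --- you never need to pass through ${\cal Z}_{loc}$.

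For comparison, the paper avoids first-order conditions and deflators entirely in this direction. It argues by contradiction: if ${\cal X}_{\tau_n}(1)$ were unbounded in probability, one takes $1$-admissible $\theta^m$ with $P((\theta^m\cdot S)_{\tau_n}\geq c_m)>\alpha$, rescales by $\delta_m\to 0$ with $\delta_m c_m\to\infty$, applies the Komlos-type Lemma A1.1 of \cite{DelbaenSchachermayer1994} to extract convex combinations converging a.s.\ to some $\widetilde X\geq 0$ with $P(\widetilde X>0)>0$, and mixes these with the optimizer $\widehat\theta^{(n)}$ to build admissible strategies whose limiting expected utility strictly exceeds the optimal value --- contradicting optimality via Fatou's lemma, concavity and strict monotonicity of $U$. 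This uses only the existence of the local optimizer and none of the delicate positivity or differentiability issues your dual approach must confront; I would recommend adopting it, or at least patching your argument at the two points above.
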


\noindent This lemma will be interpreted economically, proved, and extended to the exponential utility in Section \ref{section11}. The remaining part of this section is devoted to the proof of Theorem \ref{NumeraireLogarithm}.
%%%%%%%%%%%%%%%%%%%%%%%%%%%%%%%%%%%%%%%%%%%%%%%%%%%%%%%%%%%%%%%%%%%%%%%%%%%%%%%%%%%%%%%%%%%%%%%%%%%%%%%%%%%%%%%%%%%%%
%%%%%%%%%%%%%%%                       PROOF OF THEOREM
%%%%%%%%%%%%%%%%
%%%%%%%%%%%%%%%%%%%%%%%%%%%%%%%%%%%%%%%%%%%%%%%%%%%%%%%%%%%%%%%%%%%%%%%%%%%%%%%%%%%%%%%%%%%%%%%%%%%%%%%%%%%%%%%%%%%%%

\begin{proof}{\it of Theorem \ref{NumeraireLogarithm}:} The proof of this theorem will be achieved after three steps. The first step will focus on proving  $(i) \Longleftrightarrow (iii)$. The second step will prove  $(i) \Longleftrightarrow (iv) \Longleftrightarrow (v)$, while the last step will address  $(v) \Longrightarrow (vi) \Longrightarrow (i)$.\\

%%%%%%%%%%%%%%%%%%%%%%%%%%%%%%%%%%%%%%%%%%%%%%%%%%%%%%%%%%%%%%%%%%%%%%%%%%%%%%%%%%%%%%%%%%%%%%%%%
%%%%%%%%%%%%%%%%%%%%%%%%%%%%%%%%%%%%%%%%%%%%%%%%%%%%%%%%%%%%%%%%%%%%%%%%%%%%%%%%%
\noindent {\bf Step 1)} The proof of $(i) \Longleftrightarrow (iii)$ boils down to the proof of (i) $\Longrightarrow$ (iii), since the reverse implication follows directly from Lemma \ref{positivewealththeorem1} by considering $Q$ instead of $P$ and taking $\tau_n=T_n=T$ for all $n\geq 1$. Suppose that assertion $(i)$ holds. Then, due to the equivalence between $(i)$ and $(ii)$, we consider $Z\in {\cal Z}_{loc}(S)$ (i.e. a $\sigma$-martingale density for $S$) and put
%\begin{equation}\label{variableq}
$$
 Q := \frac{Z_T}{E\left[Z_T\right]} \cdot P \sim P.
 $$
 %\end{equation}
Let $U$ be a utility function satisfying (\ref{InadaConditiuons}) and $x\in \mbox{dom}(U)$. Thus, in virtue of Remark \ref{Remark1}, for any  $\theta\in {\cal A}_{adm}(x,U,S,Q)$, $Z(x + \theta\cdot S)$ is a nonnegative local martingale, and hence a supermartingale. Then, the concavity of $U$ leads to
\begin{equation}\label{maininequlaity}
E^QU\Bigl(x+(\theta\cdot S)_T\Bigr)\leq U\left({{x}\over{E[Z_T]}}\right)<+\infty,\ \mbox{for all}\ \theta\in {\cal A}_{adm}(x,U,S,Q).\end{equation}
Therefore, a direct application of Lemma \ref{positivewealththeorem1} under $Q$ implies the existence of a sequence of stopping times $(\tau_n)_{n\geq 1}$ that increases stationarily to $T$ and a sequence $\widehat\theta^{(n)}\in {\cal A}_{adm}(x, U,S^{\tau_n},Q)$ such that
\begin{equation}\label{maximilocal100}
 \sup_{\theta\in {\cal A}_{adm}(x,U,S^{\tau_n},Q)}E^QU\Bigl(x+(\theta\cdot S)_{\tau_n}\Bigr)=E^QU\left(x+(\widehat\theta^{(n)}\cdot S)_{\tau_n}\right).\end{equation}
 Thus, thanks to Lemma \ref{ChoulliCorollary}, we deduce the existence of $(a_l)_{l\geq 1}$ ($a_l\in (0,1))$, $\widehat\theta\in L(S)$, and a nondecreasing RCLL process $C$ such that $C_0=0$,
 \begin{equation}\label{Fatoulimit}
\displaystyle\sum_{l=n}^{m_n} a_l=1,\ \mbox{and}\  x+\sum_{l=n}^{m_n} a_l \widehat\theta^{(l)}\cdot S^{\tau_l}\ \ \ \mbox{is Fatou convergent to}\ \ \ x+\widehat\theta\cdot S-C. \end{equation}
Hence, assertion $(iii)$ will follow immediately once we prove that $\widehat\theta$ belongs to ${\cal A}_{adm}(x,U,S,Q)$ and it is the optimal solution to (\ref{Umaximixation1}). We start by proving the admissibility of $\widehat\theta$. Due to Fatou's lemma and the concavity of $U$, we get
\begin{equation}\label{firstineqaulity100}\begin{array}{lll}
  E^QU^-\Bigl(x+\widehat\theta\cdot S_T\Bigr)\leq \displaystyle\lim\inf_n E^QU^-\left(x+\sum_{l=n}^{m_n} a_l \widehat\theta^{(l)}\cdot S_{\tau_l}\right)  \\
  \\
  \hskip 3cm \leq  \displaystyle\lim\inf_n \sum_{l=n}^{m_n} a_l  E^QU^-\left(x+ \widehat\theta^{(l)}\cdot S_{\tau_l}\right).\end{array}\end{equation}
If $U(\infty)\leq 0$, then we have $$
\sum_{l=n}^{m_n} a_l  E^QU^-\left(x+ \widehat\theta^{(l)}\cdot S_{\tau_l}\right)=- \sum_{l=n}^{m_n} a_l  E^QU\left(x+ \widehat\theta^{(l)}\cdot S_{\tau_l}\right)\leq -U(x)<+\infty,$$
and the admissibility of $\widehat\theta$ follows immediately from this inequality and (\ref{firstineqaulity100}). Suppose that $U(+\infty)>0$. Then, there exists a real number $r$ such that $U(r) >0$, and the following hold
 $$\begin{array}{lll}
 \displaystyle \lim\inf_n \sum_{l=n}^{m_n} a_l  E^QU^-\left(x+ \widehat\theta^{(l)}\cdot S_{\tau_l}\right)\\
  \\
  \leq \displaystyle\lim\inf_n \sum_{l=n}^{m_n} a_l  E^QU\left(r+ x+ \widehat\theta^{(l)}\cdot S_{\tau_l}\right) - U(x)\leq U\left(\frac{r+x}{E[Z_T]}\right) - U(x) <+\infty.
\end{array}$$
 A combination of these inequalities and (\ref{firstineqaulity100}) completes the proof of $\widehat\theta\in {\cal A}_{adm}(x,U,S,Q)$. Furthermore, we get $U(x+\widehat\theta\cdot S_T)\in L^1(Q)$. Next, we will prove the optimality of the strategy $\widehat\theta$. To this end, we start by proving
 \begin{equation}\label{secondfinequality100}
 E^QU\left(x+\widehat\theta\cdot S_T\right)\geq \displaystyle\lim\sup_n E^QU\left(x+\sum_{l=n}^{m_n} a_l \widehat\theta^{(l)}\cdot S_{\tau_l}\right). \end{equation}
 If $U(+\infty)\leq 0$, then the above inequality follows from Fatou's lemma. Suppose that $U(+\infty)>0$. In this case, by mimicking the proof of Lemma 3.2 of \cite{KramkovSchachermayer}, we easily prove that
 \begin{equation}\label{unifoprmintegrability}
 \left\{ U(y_n):\ y_n:=x+\sum_{l=n}^{m_n} a_l \widehat\theta^{(l)}\cdot S_{\tau_l},\ n\geq 1\right\}\ \mbox{is $Q$-uniformly integrable}.\end{equation}
 Denote the inverse of $U$ by $\phi: (U(0+), U(+\infty)) \rightarrow (0,+\infty)$. Then we derive $E^Q[\phi(U(y_n))]\leq x/E(Z_T)$ and due to l'Hospital rule and (\ref{InadaConditiuons}) we have
 $$
 \lim_{x\rightarrow U(+\infty)} \frac{\phi(x)}{x} = \lim_{y\rightarrow +\infty} \frac{y}{U(y)} = \lim_{y\rightarrow +\infty} \frac{1}{U'(y)} =+\infty.
 $$
 Then, the uniform  integrability  of  the  sequence  $(U(y_n))_{n\geq 1}$  follows  from the La-Vall\'ee-Poussin argument. Then, (\ref{secondfinequality100}) follows immediately from this uniform integrability and (\ref{Fatoulimit}).
 Therefore, we obtain
\begin{eqnarray}
 E^QU\left(x+\widehat\theta\cdot S_T\right)&\geq& \displaystyle\lim\sup_n E^QU\Bigl(x+\sum_{l=n}^{m_n} a_l \widehat\theta^{(l)}\cdot S_{\tau_l}\Bigr)\nonumber\\
  &\geq& \displaystyle \lim\sup_n \sum_{l=n}^{m_n} a_l  E^QU\left(x+ \widehat\theta^{(l)}\cdot S_{\tau_l}\right)\nonumber\\
 &\geq& \displaystyle\lim\sup_n \sum_{l=n}^{m_n} a_l  E^QU\Bigl(x+ \epsilon\theta\cdot S_{\tau_l}\Bigr)\label{tobeexplain1}\\
 & \geq& \displaystyle \lim\inf_n\sum_{l=n}^{m_n} a_l  E^QU\Bigl(x+ \epsilon\theta\cdot S_{\tau_l}\Bigr)\nonumber\\
 \nonumber\\
 &\geq&\displaystyle E^QU\Bigl(x+ \epsilon\theta\cdot S_T\Bigr)\label{tobeexplain2}\\
\nonumber\\
 &\geq& \displaystyle(1-\varepsilon)U(x) + \varepsilon E^QU\Bigl(x+\theta\cdot S_T\Bigr),\nonumber
 \end{eqnarray}
for any $\theta\in {\cal A}_{adm}(x,U,S,Q)$, and any $\epsilon\in (0,1)$. It is clear that the optimality of $\widehat\theta$ follows immediately from the above inequalities by letting $\epsilon$ increases to one. It is obvious that (\ref{tobeexplain1}) follows from (\ref{maximilocal100}), while (\ref{tobeexplain2}) follows from Fatou's lemma and  $ U(x+\epsilon(\theta\cdot S)_{\tau_n})\geq U((1-{\epsilon})x)>-\infty$. This proves assertion $(iii)$, and the proof of $(i) \Longleftrightarrow (iii)$ is achieved.\\
%%%%%%%%%%%%%%%%%%%%%%%%%%%%%%%%%%%%%%%%%%%%%%%%%%%%%%%%%%%%%%%%%%%%%%%%%%%%%%%%%%%%%
%%%%%%%%%%%%%%%%%%%%%%%%%%%%%%%%%%%%%%%%%%%%%%%%%%%%%%%%%%%%%%%%%%%%%%%%%%%

\noindent{\bf Step 2)} Herein, we will prove $(i) \Longleftrightarrow (iv) \Longleftrightarrow  (v)$. Since the log-utility satisfies (\ref{InadaConditiuons}), then it is easy to see that the proof of $(i) \Longleftrightarrow (v)$ is similar to the proof of $(i) \Longleftrightarrow (iv)$. Thus, we will focus on proving this latter equivalence.\\

\noindent Suppose that assertion $(i)$ holds. Then, assertion $(iv)$ follows immediately as soon as we find $Q_{\delta}$ equivalent to $P$ whose density converges to one in $L^1(P)$ when $\delta$ goes to zero, and the utility maximization problem admits solution under $Q_{\delta}$ for any $\delta\in (0,1)$. To prove this latter claim, we put
\begin{equation}\label{variableq}
q:= \frac{Z_T}{E[Z_T]}, \ \ q_{\delta}:= {{q}\over{\delta + q}},\ \ Z_{\delta} := \frac{q_{\delta}}{E[q_{\delta}]} := q_{\delta}C_{\delta}, \ \ \ Q_{\delta}:=Z_{\delta}\cdot P\sim P,\end{equation}
for any $\delta\in (0,1)$. By examining closely the proof of $(i) \Longrightarrow (iii)$, we can easily conclude that the utility maximization problem admits solution under $Q_{\delta}$ whenever $Q_{\delta}$ satisfies similar inequality as in (\ref{maininequlaity}). Thus, for any utility $U$ satisfying (\ref{InadaConditiuons}), any $x\in dom(U)$, any $\delta\in (0,1)$, and any $\theta\in {\cal A}_{adm}(x,U,S,Q_{\delta})$, we derive
$$\begin{array}{llll}
E^{Q_{\delta}}U\Bigl(x+(\theta\cdot S)_T\Bigr)\leq U\Bigl(E^{Q_{\delta}}\left[x+(\theta\cdot S)_T\right]\Bigr)\leq\displaystyle U\Bigl({{E\left[Z_{T}[x+(\theta\cdot S)_T]\right]}\over{\delta E(q_{\delta})E(Z_T)}}\Bigr)\\
\\
\hskip 3.3cm \leq\displaystyle U\left({{x}\over{\delta E(q_{\delta})E[Z_T]}}\right)<+\infty.
\end{array}$$
Hence, this allows us to conclude that for any $\delta\in (0,1)$ and any utility $U$ satisfying (\ref{InadaConditiuons}), the utility maximization problem admits solution under $Q_{\delta}$. To conclude the proof of $(i) \Longrightarrow (iv)$, we will prove that $Z_{\delta}$ converges to one in $L^1(P)$ when $\delta$ goes to zero. Thanks to

$$
1>(C_{\delta})^{-1}=E\left({{q}\over{\delta + q}}\right) \geq E\left[\frac{q}{1+ q}\right]=:\Delta_0,$$
 we deduce that
$Z_{\delta}$ is positive, bounded by $(\Delta_0)^{-1}$, and converges almost surely to one when $\delta$ goes to zero. Then, for any $\epsilon>0$, the dominated convergence theorem implies the existence of $\delta:=\delta(\epsilon)>0$ such that $E\vert Z_{\delta(\epsilon)}-1\vert<\epsilon.$ This ends the proof of $(i) \Longrightarrow (iv)$. The reverse implication follows from $(iv) \Longrightarrow (iii)\Longrightarrow (i)$, and the proof of $(i) \Longleftrightarrow (iv) \Longleftrightarrow  (v)$ is completed. \\

\noindent This ends the proof of the original contributions of the theorem (i.e. $(i)\Longleftrightarrow (iii)\Longleftrightarrow (iv)\Longleftrightarrow (v)$). The remaining part of this proof ---as we explained before--- will prove the equivalence $(vi) \Longleftrightarrow (i)$  by applying $(i)\Longrightarrow (v)$, and using standard techniques such as Kolmos' arguments, Fatou's lemma and the properties of the utility function.\\

 \noindent{\bf Step 3)} In this last part, we will prove $(v) \Longrightarrow (vi) \Longrightarrow (i)$.  Suppose that assertion $(v)$ holds (and hence we have ${\cal Z}_{loc}(S)\not=\emptyset$). Then, it is easy to see that assertion $(v)$ implies the existence of the num\'eraire portfolio under each $Q_{\epsilon}$. Therefore, for any $n\geq 1$, there exist $0<Z_{n}={\kappa}_n q_n$ (here $q_n={{n}\over{n+q^{-1}}}$ where $q$ is given by (\ref{variableq})) that converges to one in $L^1(P)$, and $W^n$ the num\'eraire portfolio for $S$ under $Q_n:=Z_n\cdot P$. \\
 Hence, a direct application of Lemma \ref{ChoulliCorollary} leads to the existence of  $(\beta_n)_{n\geq 1}$ ($\beta_n\in (0,1)$), $\widetilde\theta\in L(S)$, and a nondecreasing and RCLL process $C$ such that $C_0=0$,
$$\sum_{k=n}^{m_n}\beta_l=1,\ \mbox{and}\ \sum_{k=n}^{m_n} \beta_k W^k\ \ \mbox{is Fatou convegent to}\ \ {\widetilde W}=x+\widetilde\theta\cdot S-C=:\widehat W-C.$$
Let $W\in {\cal X}(x)$ be a wealth process such that $W>0$, $b\in (0,1)$, $\alpha>1$, and $\tau$ be a stopping time. Then, there exists a sequence of stopping times $(\tau_k)_{k\geq 1}$ that decreases to $\tau$ and takes values in $\left(\hbox{I\kern-.53em\hbox{Q}}^+\cap [0,T[\right)\cup\{T\}$ such that
 $$\mbox{on}\ \ \ \{ \tau<T\}\ \ \ \ \ \ T\geq \tau_k>\tau,\ \ \ \mbox{ and on}\ \{\tau=T\} \ \ \ \ \tau_k=T.$$
 Due to Fatou's Lemma (using the convention ${{a}\over{0+}}=+\infty$, $a>0$), we obtain
$$\begin{array}{lll}
E\left({{W_{\tau}}\over{\widehat W_{\tau}}}\wedge \alpha\right)\leq E\left({{W_{\tau}}\over{\widetilde W_{\tau}}}\wedge \alpha\right)
\leq \displaystyle\lim\inf_{n}\lim\inf_{k} E\left(\displaystyle{{W_{\tau_k}}\over{\sum_{l=n}^{m_n} \beta_l W^l(\tau_k)}}\wedge \alpha\right)\\
\\
\hskip 4.4cm \leq \displaystyle\lim\inf_{n}\lim\inf_{k} E\Bigl(\left[\displaystyle\sum_{l=n}^{m_n} \beta_l{{W_{\tau_k}}\over{ W^l(\tau_k)}}\right]\wedge \alpha\Bigr).\end{array}$$
Since $q_n:={{n}\over{n+q^{-1}}}$ is increasing in $n$, then for any $l\geq n$ and any $k$ we have
$$\left\{ E(q_n|{\cal F}_{\tau_k})> b\right\}\subset\{ E(q_l|{\cal F}_{\tau_k})> b\}=\{1< b^{-1}{{Z_l(\tau_k)}\over{{\kappa}_l}}=E(q_l\big|{\cal F}_{\tau_k})b^{-1}\}.$$ Hence, we derive
$$\begin{array}{lll}
E\Bigl(\left[\displaystyle\sum_{l=n}^{m_n} \beta_l{{W_{\tau_k}}\over{ W^l(\tau_k)}}\right] \wedge\alpha\Bigr)=
E\Bigl(\left[\displaystyle\sum_{l=n}^{m_n} \beta_l{{W_{\tau_k}}\over{ W^l(\tau_k)}}\right]\wedge \alpha I_{\{ E(q_n|{\cal F}_{\tau_k})\leq b\}}\Bigr)+\\
\\
\hskip 4.5cm +
E\Bigl(\left[\displaystyle\sum_{l=n}^{m_n} \beta_l{{W_{\tau_k}}\over{ W^l(\tau_k)}}\right]\wedge \alpha I_{\{ E(q_n|{\cal F}_{\tau_k})> b\}}\Bigr)\\
\\
\hskip 4cm\leq  \alpha P\Bigl(E(q_n|{\cal F}_{\tau_k})\leq b\Bigr)+E\Bigl(\displaystyle\sum_{l=n}^{m_n} \beta_l{{Z_l(\tau_k)W_{\tau_k}}\over{b{\kappa}_l W^l(\tau_k)}} \Bigr)\\
\hskip 4cm\leq \alpha P\Bigl(E(q_n|{\cal F}_{\tau_k})\leq b\Bigr)+b^{-1}\displaystyle\sum_{l=n}^{m_n} {{\beta_l}\over{{\kappa}_l}}.
\end{array}$$
Since both ${\kappa}_n$ and $q_n$ converge to one when $n$ goes to infinity, and the random variable $E(q_n|{\cal F}_{\tau_k})$ converges to $E(q_n|{\cal F}_{\tau})$ when $k$ goes to infinity,  then it is obvious that $$\alpha P\Bigl(E(q_n|{\cal F}_{\tau_k})\leq b\Bigr)+b^{-1}\displaystyle\sum_{l=n}^{m_n} {{\beta_l}\over{{\kappa}_l}}\ \ \ \ \mbox{converges to}\ \ \ b^{-1},$$ when $k$ and afterwards $n$ goes to infinity. Hence, we deduce that
$$
E\left({{W_{\tau}}\over{\widehat W_{\tau}}}\wedge \alpha\right)\leq b^{-1},$$
for any $b\in (0,1)$, any $\alpha>1$, and any stopping time $\tau$. Thus, by taking $b$ to one and $\alpha $ to $+\infty$ and using Fatou's lemma, we deduce that
\begin{equation}\label{SupermgInequality}
E\left({{W_{\tau}}\over{\widehat W_{\tau}}}\right)\leq 1,\ \ \ \ \ \ \mbox{for any stopping time}\ \tau,\end{equation}
and for any positive wealth process $W\in {\cal X}(x)$. This proves that $\widehat W>0$, and (\ref{SupermgInequality}) holds for any $W\in {\cal X}(x)$ as well. Therefore, a combination of (\ref{SupermgInequality}) and Lemma \ref{supermartingaleIntegral} leads to the conclusion that $\widehat W$ is the num\'eraire portfolio under $P$. This completes the proof of assertion $(vi)$.\\

 \noindent The proof of the remaining implication (i.e. $(vi) \Longrightarrow (i)$) is easy, and will be detailed below for the sake of completeness. Suppose that there exists a num\'eraire portfolio $W^*$. Then, for any $\theta\in L(S)$ such that $1+\theta\cdot S\geq 0$, $$\frac{1 + \theta\cdot S}{W^*}\ \ \mbox{is a nonnegative supermartingale}.$$
As a result, for all $c>0$, we obtain
 $$\displaystyle P\Bigl(\frac{1 + (\theta\cdot S)_T}{W^*_T} > c\Bigr) \leq c^{-1} E\Bigl\{\frac{1 + (\theta\cdot S)_T}{W^*_T}\Bigr\} \leq c^{-1}.$$
 This clearly implies the boundedness of ${\cal X}_T(1)$ (the set of terminal values of the elements of ${\cal X}(1)$ defined in (\ref{admissiblewealth})) in probability and hence $S$ satisfies the NUPBR. This ends the proof of the theorem.  \end{proof}

%%%%%%%%%%%%%%%%%%%%%%%%%%%%%%%%%%%%%%%%%%%%%%%%%%%%%%%%%%%%%%%%%%%%%%%%%%%%%%%%%%%%%%%%%%%%%%%%%%%%%%%%%%%%%%%%%%%%%%%
%%%%%%%%%%%%%%%%%%%
%%%%%%%%%%%%%%%%%%%%%%%%%%%%%%%%%%%%%%%%%%%%%%%%%%%%%%%%%%%%%%%%%%%%%%%%%%%%%%%%%%%%%%%%%%%%%%%%%%%%%%%%%%%%%%%%%%%%%%%%%

\section{The Proof of Lemma \ref{positivewealththeorem1} and its Extension: Local Viability}\label{section11}

This section contains two subsections, where we prove Lemma \ref{positivewealththeorem1}, and develop its version for the exponential utility. These constitute our second main contribution of the paper. The condition (\ref{finitenssofU1}), in Lemma \ref{positivewealththeorem1}, is vital for the analysis of the utility maximization problem (see \cite{karatzasshere1991}, \cite{karazas2000}, and \cite{KramkovSchachermayer} and the references therein).  Furthermore, (\ref{finitenssofU1}) is irrelevant for the most innovative part of our lemma which is $(i) \Longrightarrow (ii)$. In the proof of Theorem \ref{NumeraireLogarithm}, where Lemma \ref{positivewealththeorem1} is applied, the condition (\ref{finitenssofU1}) is checked easily.  The reverse implication follows from the seminal work of Kramkov and Schachermayer (see \cite{KramkovSchachermayer}), and for the sake of completeness, details will be provided in the proof below. Below, in parts a) and b), we will discuss the meaning and the limitation of Lemma \ref{positivewealththeorem1} respectively.\\

\noindent{\bf a) What is the meaning of Lemma \ref{positivewealththeorem1}?} In virtue of Theorem \ref{NumeraireLogarithm}, Lemma \ref{positivewealththeorem1} proposes ---under assumption (\ref{finitenssofU1})--- an alternative to the equivalence between the NUPBR and the weak viability when working with the real-world probability measure is not an option. This lemma claims that, under mild assumptions, one can use the original belief $P$ and look for the optimal portfolio ``{\it locally}" instead of globally. The result of the lemma supports our definition of market's local viability as the market's viability up to a sequence of stopping times that increases stationarily to $T$ (respectively increases to infinity for the infinite horizon context). Furthermore, as mentioned in the introduction, this lemma closes the existing gap in quantifying the tightest relationship between the absence of arbitrage and the utility maximization {\it \` a la} Delbaen and Schachermayer (i.e. without changing measure, but by weakening and/or strengthening the concepts under consideration).\\

\noindent{\bf b) Can NFLVR be substituted into NUPBR in Lemma \ref{positivewealththeorem1}?} The stability of the NUPBR under the localization is a direct consequence of Takaoka's Theorem (see Theorem 2.6 in \cite{TAKAOKA}). In contrast to the NUPBR, Non-Arbitrage (see Definition \ref{BK}--(b)) or NFLVR (see \cite{DelbaenSchachermayer1994} and \cite{DelbeanSchachermayer} for its definition) can hold locally and fail globally. Thus, the existence of the optimal portfolio might not eliminate arbitrage opportunities in the model and hence NFLVR might be violated. For the sake of completeness, below we provide an example.

\begin{example}\label{MainExample}
%{\bf Example:}
Consider Example 4.6 of \cite{KaratzasKardaras2007}, where the market model is one stock on the finite time horizon $[0,1]$, with $S_0 =  1$ and $S$ satisfies $dS_t = (1/S_t)dt + d\beta_t.$ Here $\beta$ is a standard one-dimensional Brownian motion. It is worth mentioning that this example goes back to \cite{DelbaenSchachermayer1994}, and appeared in \cite{AnkirchnerImkeller} afterwards. In \cite{KaratzasKardaras2007}, the authors proved that both arbitrage opportunities and the num\'eraire portfolio (given by $\widetilde X=S$) exist for this model. Furthermore, it is easy to calculate
$$
\log({\cal E}(X)_1):=\log(S_1)=\int_0^1 {1\over{S_u}}d\beta_u+{1\over{2}}\int_0^1 {1\over{S_u^2}}du,\ \ \&\ \ E\int_0^1 {1\over{S_u^2}}du\leq 2\log(2).$$
Therefore, the log utility maximisation problem admits solution, while there is no equivalent martingale measure.
\end{example}
%------------------------------------------------------------------------------------------------------------------
%------------------------------------------------------------------------------------------------------------------
%------------------------------------------------------------------------------------------------------------------

\subsection{Proof of Lemma \ref{positivewealththeorem1}:} We start by proving the easiest part of the lemma, which is $(ii) \Longrightarrow (i)$. Suppose that $S$ satisfies the NUPBR condition. Thanks to Takaoka's Theorem (see Theorem 2.6 in \cite{TAKAOKA}), we conclude the existence of a local martingale $Z>0$ and a real-valued predictable process $\varphi$ such that $0<\varphi\leq 1$ and $Z(\varphi\cdot S) \in {\cal M}_{loc}(P).$ Then, for any $\theta\in L(S)$ we have $\theta\cdot S=\theta^{\varphi}\cdot S^{\varphi}$ where $\theta^{\varphi}:=\theta/{\varphi}$ and $S^{\varphi}:=\varphi\cdot S$. Thus, without loss of generality, we assume that $ZS$ is a local martingale. Consider a sequence of stopping times, $(\sigma_n)_{n\geq 1}, $  that increases stationarily to $T$ such that both
$Z^{\sigma_n}$ and $Z^{\sigma_n}S^{\sigma_n}$ are martingales.  Put
$$Q_n := Z_{\sigma_n} \cdot P\ \ \ \ \mbox{and}\ \ \  \tau_n := T_n \wedge \sigma_n \uparrow T.$$
 Then, $Q_n$ is an equivalent martingale measure for $S^{\sigma_n}$. Since $\theta I_{\Lbrack 0,\tau_n\Rbrack}$ belongs to ${\cal A}_{adm}(x_n, S^{T_n})$ whenever $\theta \in  {\cal A}_{adm}(x_n, S^{\tau_n})$, for all $n\geq 1$ we derive
$$\sup_{\theta \in  {\cal A}_{adm}(x_n, S^{\tau_n})} EU\Bigl(x_n + (\theta\cdot S)_{\tau_n}\Bigr) \leq \sup_{\psi \in  {\cal A}_{adm}(x_n, S^{T_n})} EU\Bigl(x_n + (\psi\cdot S)_{T_n}\Bigr) < +\infty.$$
Therefore, a direct application of Theorems 2.1 and 2.2 of \cite{KramkovSchachermayer} implies that for any $n\geq 0$ and any initial wealth $x_0 >0$, there exists an $x_0$-admissible optimal strategy $\widehat{\theta}^{(n)}$ for $S^{\tau_n},$ such that
$$\max_{\theta \in  {\cal A}_{adm}(x_0, S^{\tau_n})} EU\Bigl(x_0 + (\theta\cdot S)_{\tau_n}\Bigr) = EU\Bigl(x_0 + (\widehat{\theta}^{(n)} \cdot S)_{\tau_n}\Bigr)<+\infty.$$
This proves assertion $(i)$. In the remaining part of the proof, we will focus on proving $(i) \Longrightarrow (ii)$. Suppose that assertion $(i)$ holds, and consider $x_0 = 1 + r$ such that  $r\in \mbox{dom}(U)$. Then, there exists $\widehat{\theta}^{(n)}\in {\cal A}_{adm}(1+r, S^{\tau_n})$ such that
$$\max_{\theta \in  {\cal A}_{adm}(1+r, S^{\tau_n})} EU\Bigl(1 + r + (\theta\cdot S)_{\tau_n}\Bigr) = EU\Bigl(1 + r + (\widehat{\theta}^{(n)} \cdot S)_{\tau_n}\Bigr) < +\infty.$$
For the sake of simplicity, we put $\tau := \tau_n$ and $\widehat{\theta}:=\widehat{\theta} ^{(n)}$ in what follows. In order to prove the NUPBR for $S^{\tau}$, we proceed by assuming that
$${\cal K} := \{(H\cdot S)_{\tau} | H \mbox {  is a 1-admissible strategy for } S^{\tau}\}$$
is not bounded in $L^0(P)$. Therefore, there exist a sequence of $1$-admissible strategy $(\theta^m)_{m\geq 1}$, a sequence of positive real numbers, $(c_m)_{m\geq 1}$, that increases to $+\infty$, and $\alpha > 0$ such that
$$P\Bigl((\theta^m\cdot S)_{\tau} \geq c_m\Bigr) > \alpha > 0.$$
Consider a sequence of positive numbers, $(\delta_m)_{m\geq 1}$, such that
$$0\leq \delta_m \rightarrow 0,\ \ \ \ \mbox{and}\ \ \ \ \delta_m c_m \rightarrow +\infty.$$
Then, put
$$X_m := \delta_m(\theta^m\cdot S)_{\tau} \geq -\delta_m,\ \ \ \ \ \  \mbox{for all}\ \ m \geq 1.$$
Hence, an application  of Kolmos's argument to $(X_m+\delta_m)_{m\geq 1}$ (see Lemma A1.1 of \cite{DelbaenSchachermayer1994}) leads to the existence of a sequence of random variables, $(g_k)_{k\geq 1}$, such that
$$ 0\leq g_k := \sum_{m=k}^{N_k}  \alpha_m X_m + \sum_{m=k}^{N_k} \alpha_m\delta_m \in \mbox{conv}\Bigl(X_m+\delta_m,\ m\geq k\Bigr) ,$$
and $g_k$ converges almost surely to $\widetilde{X} \geq 0$, with $P(\widetilde{X}>0) > 0.$\\
Since $y_k:= \displaystyle\sum_{m=k}^{N_k}  \alpha_m\delta_m$ converges to zero, we conclude that
$$-y_k \leq \widetilde{X}_k := \sum_{m=k}^{N_k}  \alpha_m\delta_m(\theta^m\cdot S)_{\tau}\ \ \mbox{converges to}\ \ \widetilde{X}\ \ \ P-a.s.,\ \ \mbox{and}$$
$$-(1+r)(1-y_k)\leq \widehat{X}_k := (1-y_k)(\widehat{\theta}\cdot S)_{\tau} \ \ \mbox{converges to}\ \ \ \ \ \ (\widehat{\theta}\cdot S)_{\tau}\ \ P-a.s .$$
Consider the new trading strategies
$$\widetilde{\theta}^{(k)} := \sum_{m=k}^{N_k}  \alpha_m\delta_m\theta^m + \Bigl(1 -  \sum_{m=k}^{N_k}  \alpha_m\delta_m\Bigr)\widehat{\theta} = \sum_{m=k}^{N_k}  \alpha_m\delta_m\theta^m + (1 -  y_k)\widehat{\theta}.$$
Then, it is easy to check that $1 + r +  \widetilde{\theta}^{(k)} \cdot S_{\tau} = 1 +  r + \widetilde{X}_k+ \widehat{X}_k \geq  y_k r > 0$ (due mainly to $-y_k \leq \widetilde{X}_k$ and $-(1+r)(1-y_k)\leq \widehat{X}_k$). Furthermore, due to the concavity of $U$, we have
\begin{eqnarray*}
U\Bigl(1 + r + (\widetilde{\theta}^{(k)} \cdot S)_{\tau}\Bigr) &=& U\Bigl(1 + r + \widetilde{X}_k+ \widehat{X}_k\Bigr)\\
&=& U\Bigl(1 + r + \widetilde{X}_k+ (1 -  y_k)(\widehat{\theta}\cdot S)_{\tau}\Bigr)\\
&\geq & U\Bigl(1 + r - y_k + (1 -  y_k)(\widehat{\theta}\cdot S)_{\tau}\Bigr)\\
&=& U\Bigl( y_kr + (1 -  y_k)\Bigl[1 + r + (\widehat{\theta}\cdot S)_{\tau} \Bigr]\Bigr)\\
&\geq& y_kU(r) + (1-y_k) U\Bigl(1 + r + (\widehat{\theta}\cdot S)_{\tau}\Bigr).
\end{eqnarray*}
This implies that $\widetilde{\theta}^{(k)} \in  {\cal A}_{adm}(1+r, S^{\tau})$. On the one hand, a combination of the previous inequality and Fatou's lemma implies that
\begin{equation}\label{general-utility-small}\begin{array}{llllllll}
E\left\{U\left(1 + r + \widetilde{X} + (\widehat{\theta} \cdot S)_{\tau}\right)- U\left(1 + r + (\widehat{\theta} \cdot S)_{\tau}\right) \right\}\\
\\
= E\left\{\displaystyle\lim_k \left[ U\Bigl(1 + r + \widetilde{X}_k + \widehat{X}_k\Bigr) - (1-y_k)U\Bigl(1 + r + (\widehat{\theta} \cdot S)_{\tau}\Bigr) - y_kU(r) \right]\right\} \\ \\
=E\left\{\displaystyle\lim_k \left[U\Bigl(1 + r + (\widetilde{\theta}^{(k)}\cdot S)_{\tau}\Bigr) - (1-y_k)U\Bigl(1 + r + (\widehat{\theta} \cdot S)_{\tau}\Bigr) - y_kU(r) \right]\right\} \\
\\
\leq \displaystyle\liminf_k E\left\{U\Bigl(1 + r + (\widetilde{\theta}^{(k)}\cdot S)_{\tau}\Bigr) - (1-y_k)U\Bigl(1 + r + (\widehat{\theta} \cdot S)_{\tau}\Bigr) -y_kU(r)\right\} \\
\\
\leq \displaystyle\liminf_k E\left\{U\Bigl(1 + r + (\widehat{\theta} \cdot S)_{\tau}\Bigr) - (1-y_k)U\Bigl(1 + r + (\widehat{\theta} \cdot S)_{\tau}\Bigr) -y_kU(r)\right\}\\
\\=0.
\end{array}\end{equation}

\noindent On the other hand, since $P(\widetilde{X}>0) > 0$ and $U$ is strictly increasing, we get
$$
E\left\{U\Bigl(1 + r + \widetilde{X} + (\widehat{\theta} \cdot S)_{\tau}\Bigr) \right\} > E\left\{U\Bigl(1 + r + (\widehat{\theta}\cdot S)_{\tau}\Bigr)\right\}.$$
This is a contradiction with (\ref{general-utility-small}), and the NUPBR for $S^{\tau}$ is fulfilled.  Then, the global NUPBR for $S$ is a direct consequence of Takaoka's Theorem (Theorem 2.6 of \cite{TAKAOKA}), and the proof of the theorem is completed.

%%%%%%%%%%%%%%%%%%%%%%%%%%%%%%%%%%%%%%%%%%%%%%%%%%%%%%%%%%%%%%%%%%%%%%%%%%%%%%%%%%%%%%%%%%%%%
%%%%
%%%%
%%%
%%%%%%%%%%%%%%%%%%%%%%%%%%%%%%%%%%%%%%%%%%%%%%%%%%%%%%%%%%%%%%%%%%%%%%%%%%%%%%%%%%%%%%%%%%%%%%%%%%%

%------------------------------------------------------------------------------------------------------------------
%------------------------------------------------------------------------------------------------------------------
%------------------------------------------------------------------------------------------------------------------

%------------------------------------------------------------------------------------------------------------------
%------------------------------------------------------------------------------------------------------------------
%------------------------------------------------------------------------------------------------------------------
\subsection{Extension to the case of exponential utility}

We believe that the extension of Lemma \ref{positivewealththeorem1} to the exponential utility is valuable and deserves attention for two reasons. The first reason lies in the popularity of the exponential utility, while the second reason lies in our belief that for this case, when $S$ is locally bounded, we may obtain more precise results with less assumptions. Throughout this section the set of admissible strategies for the model $(X,Q)$ will be denoted by $\Theta(X)$, and is given by
$$\Theta(X):=\displaystyle\left\{\theta\in L(X)\ \big|\ \ \theta\cdot X\ \mbox{are uniformly bounded in}\ (\omega,t)\  \right\}.$$
Then, the set of local martingale densities that are locally in $L\log L$ will be denoted by
\begin{equation}\label{setofZloffinite}
\mathcal{Z}_{f,loc}(X,Q):=\displaystyle\left\{Z>0\Big|\ Z,\ ZX\in{\cal M}_{loc}(Q),\ Z\log(Z)\ \mbox{is $Q$-locally integrable}\right\}.
\end{equation}
\noindent when $Q=P$, we simply write $\mathcal{Z}_{f,loc}(X)$.
\begin{definition}\label{MEH processes} Let $Z={\cal E}(N)\geq 0$, where $N\in {\cal M}_{0,loc}(P)$. If
\begin{equation}\label{VEntropy}
V^{(E)}(N):={1\over{2}}\langle N^c\rangle +\sum \Bigl[(1+\Delta N)\log(1+\Delta N)-\Delta N\Bigr],\end{equation}
is locally integrable, then its compensator is called the entropy-Hellinger process of $Z$ and is denoted by $h^{E}(Z,P)$ (see \cite{choullistricker2005} for details).
\end{definition}

\begin{lemma}\label{exponential-local-NUPBR}
Suppose $S$ is locally bounded. Then the following are equivalent: \\
(i) There exist  a sequence of stopping times $(\tau_n)_{n\geq1}$ increasing stationarily to $T$ and $\widehat\theta ^n\in L(S^{\tau_n})$ such that $\displaystyle E\left(\sup_{0\leq t\leq {\tau_n}}\exp\left[-(\widehat\theta^n\cdot S)_t\right]\right)<+\infty$ and
\begin{equation}\label{exp-localS-utilitymax}
\inf_{\theta\in\Theta(S^{\tau_n})} E\left(e^{-(\theta\cdot S)_{\tau_n}}\right)=E\left(e^{-(\widehat\theta^n\cdot S)_{\tau_n}}\right). \end{equation}
(ii) $\mathcal{Z}_{f,loc}(S)\neq \emptyset$.
\end{lemma}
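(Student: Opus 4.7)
The plan is to mirror the proof of Lemma~\ref{positivewealththeorem1}, replacing the power/log utility by the exponential utility, Takaoka's characterization of NUPBR by the analogous characterization of $\mathcal{Z}_{f,loc}(S)\neq\emptyset$, and the Kramkov--Schachermayer existence theorem by the exponential-utility duality theorem (see \cite{frittelli2000}, \cite{Delbanen2002}). As in that lemma, the direction $(ii)\Rightarrow(i)$ is the routine localization and $(i)\Rightarrow(ii)$ carries the novel content; the latter will be handled through a first-order perturbation argument inside the class $\Theta(S^{\tau_n})$ of strategies with uniformly bounded wealth process, followed by a passage from local to global via the structural characterization of $\mathcal{Z}_{f,loc}$ at the level of predictable characteristics.

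For $(ii)\Rightarrow(i)$, pick $Z\in\mathcal{Z}_{f,loc}(S)$ and choose a common reduction sequence $(\tau_n)\uparrow T$ stationarily for which $Z^{\tau_n}$ and $(ZS)^{\tau_n}$ are true martingales, $S^{\tau_n}$ is bounded, and $E[Z_{\tau_n}\log Z_{\tau_n}]<+\infty$; this is possible since $S$ is locally bounded and $Z\log Z$ is locally integrable. Setting $Q_n:=(Z_{\tau_n}/E[Z_{\tau_n}])\cdot P$ yields an equivalent martingale measure for the bounded semimartingale $S^{\tau_n}$ with finite $P$-relative entropy, and the exponential-utility duality theorem then produces an optimizer $\widehat\theta^n$ realizing (\ref{exp-localS-utilitymax}) together with the identification $c_n\exp(-(\widehat\theta^n\cdot S)_{\tau_n})=dQ^{(n)*}/dP$, where $Q^{(n)*}$ is the minimum-entropy martingale measure for $S^{\tau_n}$ and $c_n:=1/E[\exp(-(\widehat\theta^n\cdot S)_{\tau_n})]$. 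The integrability of $\sup_{t\leq\tau_n}\exp[-(\widehat\theta^n\cdot S)_t]$ then follows from Doob's maximal inequality applied to the positive $P$-martingale $t\mapsto E\bigl[\exp(-(\widehat\theta^n\cdot S)_{\tau_n})\mid\mathcal{F}_t\bigr]$.

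For $(i)\Rightarrow(ii)$, fix $n$, define $c_n$ as above, and set the strictly positive $P$-martingale $M^{(n)}_t:=c_n\,E\bigl[\exp(-(\widehat\theta^n\cdot S)_{\tau_n})\mid\mathcal{F}_t\bigr]$. Every $\theta\in\Theta(S^{\tau_n})$ has $\theta\cdot S$ uniformly bounded, so the perturbation $\widehat\theta^n+\varepsilon\theta$ remains in $\Theta(S^{\tau_n})$ for every $\varepsilon\in\mathbb{R}$ and dominated convergence justifies differentiating under the expectation at $\varepsilon=0$. The optimality of $\widehat\theta^n$ thus yields the Euler equation
\[
E\Bigl[\exp\bigl(-(\widehat\theta^n\cdot S)_{\tau_n}\bigr)(\theta\cdot S)_{\tau_n}\Bigr]=0,\qquad\forall\,\theta\in\Theta(S^{\tau_n}),
\]
which is exactly the statement that $M^{(n)}S^{\tau_n}$ is a local $P$-martingale; hence $Q^{(n)*}:=M^{(n)}_{\tau_n}\cdot P$ is an equivalent martingale measure for $S^{\tau_n}$. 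Running the exponential-utility duality theorem in reverse from the attained primal value $1/c_n>0$ (see \cite{frittelli2000}, \cite{Delbanen2002}) identifies $Q^{(n)*}$ with the minimum-entropy martingale measure, gives $H(Q^{(n)*}|P)=\log c_n<+\infty$, and shows $M^{(n)}\in\mathcal{Z}_{f,loc}(S^{\tau_n})$. A single global $Z\in\mathcal{Z}_{f,loc}(S)$ is then produced by exploiting the local-in-$(\tau_n)$ nature of the condition $\mathcal{Z}_{f,loc}(\cdot)\neq\emptyset$, through the structural characterization at the level of the predictable characteristics of $S$ and the associated entropy-Hellinger process $h^{E}$ (cf.~\cite{choullistricker2005}).

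The main obstacle, as in Lemma~\ref{positivewealththeorem1}, lies in the direction $(i)\Rightarrow(ii)$. While the first-order perturbation itself is routine thanks to the richness of $\Theta(S^{\tau_n})$ as a test class of bounded-wealth strategies, the delicate step is the passage from the $n$-indexed local densities $M^{(n)}\in\mathcal{Z}_{f,loc}(S^{\tau_n})$ to a single global density $Z\in\mathcal{Z}_{f,loc}(S)$. A naive pasting $Z|_{[0,\tau_n]}:=M^{(n)}$ fails, because the $\mathcal{F}_{\tau_n}$-restriction of the MEMM for $S^{\tau_{n+1}}$ generally does not coincide with the MEMM for $S^{\tau_n}$; this obstruction is overcome by working at the level of semimartingale characteristics, where the existence of a density in $\mathcal{Z}_{f,loc}$ reduces to a genuinely local structural condition that is inherited from its validity on each $[0,\tau_n]$ thanks to the stationary convergence $\tau_n\uparrow T$.
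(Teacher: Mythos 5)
Your overall architecture matches the paper's ($(ii)\Rightarrow(i)$ by localization plus exponential-utility duality; $(i)\Rightarrow(ii)$ by a first-order condition plus a local-to-global step), but two of your key steps have genuine gaps. First, in $(i)\Rightarrow(ii)$ you perturb as $\widehat\theta^n+\varepsilon\theta$ and claim this stays in $\Theta(S^{\tau_n})$. It does not: the optimizer $\widehat\theta^n$ is only asserted to lie in $L(S^{\tau_n})$ with $E(\sup_t \exp[-(\widehat\theta^n\cdot S)_t])<\infty$; its wealth process is in general unbounded, so $\widehat\theta^n+\varepsilon\theta\notin\Theta(S^{\tau_n})$ and the optimality over $\Theta(S^{\tau_n})$ cannot be differentiated in that direction. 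This is precisely why the paper first approximates $\widehat\theta^n$ by genuinely bounded strategies $\theta^{(N)}\in\Theta(S^{\tau_n})$ with the pointwise domination $\sup_t e^{-\theta^{(N)}\cdot S_{t\wedge\tau_n}}\leq 6\sup_t U^{(n)}_t\in L^1$ (mimicking Lemma 4.1 of \cite{Delbanen2002}), perturbs those by $\phi_{\lambda,N}=-\lambda\theta+\theta^{(N)}$, and only then passes to the limit in $N$ and $\lambda$; the domination is also what legitimizes the interchange of limit and expectation that you attribute to an unspecified dominated convergence. A second, separate integrability argument (showing $(\widehat\theta^n\cdot S)_{\tau_n}e^{-(\widehat\theta^n\cdot S)_{\tau_n}}\in L^1$) is then needed to get the $L\log L$ property of the resulting density, which your proposal does not supply.

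Second, your local-to-global step is not a proof: you correctly observe that naively restricting the $n$-th density to $[0,\tau_n]$ fails, but then appeal to an unspecified ``structural characterization at the level of predictable characteristics.'' The paper's resolution is concrete and elementary: write $\widehat Z^n={\cal E}(\widehat N^{(n)})$, concatenate the stochastic logarithms as $\widehat N:=\sum_n I_{\Rbrack\tau_{n-1},\tau_n\Rbrack}\cdot\widehat N^{(n)}$, and verify (Lemma \ref{MEHlocalization}) that ${\cal E}(\widehat N)$ is positive, is a $\sigma$-martingale density for $S$, and has locally integrable entropy-Hellinger process $V^{(E)}(\widehat N)$, which yields the local $L\log L$ integrability required by $\mathcal{Z}_{f,loc}(S)$. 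Without this (or an equivalent explicit construction) assertion $(ii)$ is not established. A smaller issue in your $(ii)\Rightarrow(i)$: the required bound is on $\sup_t\exp[-(\widehat\theta^n\cdot S)_t]$, the running wealth, not on the maximal function of the closed martingale $E[\exp(-(\widehat\theta^n\cdot S)_{\tau_n})\mid\mathcal{F}_t]$; you need the pathwise domination $\exp[-(\widehat\theta^n\cdot S)_{t\wedge\tau_n}]\leq C_nZ^{(E,n)}_{t\wedge\tau_n}$ from Lemma 3.2 of \cite{Delbanen2002}, and then Doob's $L\log L$ maximal inequality (plain $L^1$ Doob gives no integrable supremum).
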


\begin{proof}We start by proving $(ii)\Longrightarrow (i)$. Suppose that assertion $(ii)$ holds, and consider $Z\in \mathcal{Z}_{f,loc}(S)$. Then, there exists a sequence of stopping times, $(\tau_n)_{n\geq 1}$, that increases stationarily to $T$ such that $Z^{\tau_n}$ is a martingale and $h^E_{t\wedge{\tau_n}}(Z,P)$ is bounded. Therefore,  due to Theorem 3.7 or  Proposition 3.6 in \cite{choullistricker2005}, we deduce that $Q^n:=Z_{\tau_n}\cdot P$ is an equivalent martingale measure for $S^{\tau_n}$ satisfying the reverse H\"older condition $R_{L\log L}\left( P \right)$  (for the definition of reverse H\"older condition, we refer to \cite{Delbanen2002} or \cite{choullistricker2005}). Thus, Theorem 2.1 of \cite{KabanovStricker2002} implies the existence of the optimal solution $\widehat\theta^n\in L(S^{\tau_n})$ for (\ref{exp-localS-utilitymax}) such that $\exp\left[-(\widehat{\theta}^{n}\cdot S)_{\tau_n} \right]=E\exp\left[-(\widehat{\theta}^{n}\cdot S)_{\tau_n} \right] Z^{(E,n)}_{\tau_n}$ on the one hand. Here, $Z^{(E,n)}$ is the minimal entropy martingale density for $S^{\tau_n}$ which is an LlogL-integrable martingale and hence $E\left(\sup_{0\leq t\leq\tau_n} Z^{(E,n)}_t\right)<+\infty$. On the other hand, by Lemma 3.2 of \cite{Delbanen2002}, we conclude the existence of a positive constant $C_n$ such that $\exp\left[-(\widehat\theta^n\cdot S)_{t\wedge \tau_n}\right] \leq C_n Z^{E,n}_{t\wedge\tau_n}$. This ends the proof of assertion $(i)$.\\

\noindent In the remaining part of this proof, we will prove $(i)\Longrightarrow (ii)$. Suppose that assertion $(i)$ holds and put
\begin{equation}
 U^{(n)}_t := \exp\left( -\widehat{\theta}^{n}\cdot S_{t\wedge \tau_n} \right).
\end{equation}
Then by mimicking the proof of Lemma 4.1 in \cite{Delbanen2002}, we deduce that there exists a sequence of bounded strategies $(\theta^{(N)})_{N\geq 1} \subset \Theta\left( S^{\tau_n} \right)$ such that $P-a.s$
\begin{equation}\label{dominationinequality}
 \lim_{N \longrightarrow+\infty}e^{  - (\theta^{(N)}\cdot S)_{\tau_n} } =U^{(n)}_{\tau_n}\ \&\ \sup_{0\leq t\leq T} e^{  - \theta^{(N)}\cdot S_{t\wedge\tau_n} } \leq 6 \sup_{0\leq t\leq T} U^{(n)}_{t} \in L^1(P).
\end{equation}
Therefore, $\exp\left[- (\theta^{(N)}\cdot S)_{\tau_n}\right] $ converges to $U^{(n)}_{\tau_n}$ in $L^1$ when $N$ goes to $+\infty$.
For an arbitrary but fixed $\theta\in\Theta(S^{\tau_n})$ and any $\lambda\in(0,1)$, we denote
$$
\phi_{\lambda,N}:= -\lambda \theta+ \theta^{(N)} \in \Theta(S^{\tau_n}),
$$ and by making use of $Ee^{- \widehat{\theta}^n \cdot S_{\tau_n}}-Ee^{-\phi_\lambda \cdot S_{\tau_n}}\leq0 $ we derive
\begin{eqnarray*}
 \frac{E e^{-\theta^{(N)}\cdot S_{\tau_n}}-Ee^{-\phi_\lambda \cdot S_{\tau_n}}}{\lambda }
 &\leq& \frac{Ee^{-\theta^{(N)}\cdot S_{\tau_n}}-Ee^{-\widehat{\theta}^n \cdot S_{\tau_n}}}{\lambda } \rightarrow 0, \  \mbox{as } N\ \rightarrow +\infty.
\end{eqnarray*}
Due to (\ref{dominationinequality}) and $\theta\in \Theta(S^{\tau_n})$, the variable $(e^{-(\theta^{(N)}\cdot S)_{\tau_n}} -e^{-(\phi_{\lambda,N} \cdot S)_{\tau_n}})/{\lambda }$ converges to  $-( \theta\cdot S_{\tau_n})\exp[-\widehat{\theta}^n\cdot S_{\tau_n}]$ in $L^1(P)$  when $\lambda$ and $N$ go to zero and infinity respectively. By combining all the above remarks, we obtain
\begin{equation}
 E^{Q_n}\Bigl[ -(\theta\cdot S)_{\tau_n}\Bigr] \leq 0,\ \ \ \mbox{where}\ \ \ \  Q_n:={{\exp\left[-(\widehat{\theta}^n\cdot S)_{\tau_n}\right]}\over{E\left(\exp\left[-(\widehat{\theta}^n\cdot S)_{\tau_n}\right]\right)}}\cdot P.
\end{equation}
Since $\theta$ is arbitrary in $\Theta(S^{\tau_n})$, we conclude that $Q_n$ is an equivalent martingale measure for $S^{\tau_n}$. The density process of this martingale measure will be denoted by
$$\widehat{Z}^n_t:=\frac{E\left(\exp\left[-(\widehat\theta^n\cdot S)_{\tau_n}\right]\Big|\mathcal{F}_{t}\right)}{E\left(\exp\left[-(\widehat\theta^n\cdot S)_{\tau_n}\right]\right)}=:{\cal E}_t\left(\widehat N^{(n)}\right).$$
For any $\theta\in\Theta(S^{\tau_n})$,  and any $\lambda\in(0,1)$, on the one hand, the convexity of $e^x$ leads to conclude that $((\theta\cdot S)_{\tau_n}- (\widehat\theta^n\cdot S)_{\tau_n}) \exp(-(\widehat\theta^n\cdot S)_{\tau_n})$ is bounded from below by $-\exp(-(\theta\cdot S)_{\tau_n})\in L^1(P)$. On the other hand, again the convexity of $e^x$ combined with Fatou's lemma and the minimality of $\widehat{\theta}^n$ imply that
\begin{eqnarray*}
E\left(e^{-(\widehat\theta\cdot S)_{\tau_n}}((\theta-\widehat\theta^n)\cdot S)_{\tau_n}\right)&\leq&  \lim_{\lambda \rightarrow 0}E\left(e^{-\widehat\theta\cdot S_{\tau_n}}\frac{1 - \exp\left[-\lambda(( \theta - \widehat{\theta}^n)\cdot S)_{\tau_n}\right]}{\lambda}\right)\\
\\
 &\leq& 0,\end{eqnarray*}
This proves that $K_n:=(\widehat\theta^n\cdot S)_{\tau_n}\exp\left[{-(\widehat\theta^n\cdot S)_{\tau_n}}\right]\in L^1(P)$. By combining this with
$$\widehat{Z}^n_{\tau_n}\log(\widehat{Z}^n_{\tau_n})=\frac{-K_n-\exp(-(\widehat\theta^n\cdot S)_{\tau_n})\log\left(E\left[\exp(-(\widehat\theta^n\cdot S)_{\tau_n})\right] \right)}{E\left(\exp\left[-(\widehat\theta^n\cdot S)_{\tau_n}\right]\right)},$$
we deduce that $\widehat{Z}^n_{\tau_n}\log(\widehat{Z}^n_{\tau_n})$ is integrable, and hence $\widehat{Z}^n$ is a martingale density for $S^{\tau_n}$ that is $L\log L$-integrable. Then, by putting
$$ {\widehat N}:=\sum_{n=1}^{+\infty} I_{\Rbrack \tau_{n-1}, \tau_n\Rbrack}\cdot \widehat N^{(n)},$$
and applying Lemma \ref{MEHlocalization} below, assertion $(ii)$ follows immediately. This ends the proof of the lemma.\end{proof}

\begin{remark} The extension of Lemma \ref{positivewealththeorem1} to general utilities $U$ with dom$(U)=\mathbb R$ can be found in \cite{Deng2014}. This extension is less attractive, when comparing it to Lemmas \ref{positivewealththeorem1} and \ref{exponential-local-NUPBR}, due to the required technical assumptions.\end{remark}

\begin{lemma}\label{MEHlocalization}
Let $(\tau_n)_{n\geq 1}$ be a sequence of stopping times that increases stationarily to $T$, and $(N^{(n)})_n$ be a sequence of local martingales. Then, the process
$$
N:=\sum_{n=1}^{+\infty} I_{\Rbrack \tau_{n-1},\tau_n\Rbrack}\cdot N^{(n)}, \ \ (\tau_0 = 0),
$$
is a local martingale satisfying the following.\\
(i) If ${\cal E}(N^{(n)})>0$ for any $n\geq 1$, then  ${\cal E}(N)>0$.\\
(ii) If  $V^{(E)}(N^{(n)})\in {\cal A}^+_{loc}(P)$ for any $n\geq 1$, then  $V^{(E)}(N)\in {\cal A}^+_{loc}(P)$.\\
(iii) If ${\cal E}(N^{(n)})$ is a $\sigma$-martingale density for $S^{\tau_n}$ for any $n\geq 1$, then  ${\cal E}(N)$ is a $\sigma$-martingale density for $S$.
\end{lemma}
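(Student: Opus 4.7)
The central leverage is the \emph{stationary} increase of $(\tau_n)$: for each $\omega$ there exists $n_0(\omega)$ with $\tau_n(\omega)=T$ for all $n\ge n_0$, so the infinite sum defining $N$ is in fact locally finite. First I would observe that on each stopped interval $N^{\tau_k}=\sum_{n=1}^{k} I_{\Rbrack\tau_{n-1},\tau_n\Rbrack}\cdot N^{(n)}$ is a finite sum of integrals of bounded predictable processes against local martingales, hence $N^{\tau_k}\in{\cal M}_{loc}(P)$, and since $\tau_k$ reduces everywhere this upgrades to $N\in{\cal M}_{loc}(P)$. The same decomposition shows that on $\Rbrack\tau_{n-1},\tau_n\Rbrack$ the jumps of $N$ coincide with those of $N^{(n)}$ and $d\langle N^c\rangle = d\langle (N^{(n)})^c\rangle$ there. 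Assertion (i) then follows at once: at any jump time $t$, $1+\Delta N_t = 1+\Delta N^{(n)}_t > 0$ for the unique $n$ with $t\in(\tau_{n-1},\tau_n]$, so ${\cal E}(N)>0$.

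For (ii), the previous decomposition yields $V^{(E)}(N)_{\tau_k}=\sum_{n=1}^{k}[V^{(E)}(N^{(n)})_{\tau_n}-V^{(E)}(N^{(n)})_{\tau_{n-1}}]$. For each $n$ I would fix a localizing sequence $(\sigma^{(n)}_m)_m$ with $\sigma^{(n)}_m\uparrow T$ a.s.\ and $E[V^{(E)}(N^{(n)})_{\sigma^{(n)}_m}]<+\infty$, then diagonalize: choose integers $m_k$ large enough that $P(\sigma^{(n)}_{m_k}<T-1/k)\le 2^{-k}/k$ for every $n\le k$, and set $\rho_k:=\tau_k\wedge\bigwedge_{n=1}^{k}\sigma^{(n)}_{m_k}$. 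Borel--Cantelli then gives $\rho_k\uparrow T$ a.s., while $\rho_k\le\sigma^{(n)}_{m_k}$ for $n\le k$ combined with the identity above bounds $E[V^{(E)}(N)_{\rho_k}]$ by $\sum_{n=1}^{k} E[V^{(E)}(N^{(n)})_{\sigma^{(n)}_{m_k}}]<+\infty$. The delicate point is precisely this diagonalization; the naive $\bigwedge_{n=1}^{k}\sigma^{(n)}_k$ need not converge a.s.\ to $T$ because the size of the minimum grows with $k$, so a carefully tuned $m_k$ is essential.

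For (iii), set $Z:={\cal E}(N)$ and $Z^{(n)}:={\cal E}(N^{(n)})$. The key identity I would establish is
$Z_t=(Z_{\tau_{n-1}}/Z^{(n)}_{\tau_{n-1}})\,Z^{(n)}_t\ \mbox{on}\ [\tau_{n-1},\tau_n]$,
obtained since both sides solve the SDE $dX=X_{-}dN^{(n)}$ on this interval with the common initial value $Z_{\tau_{n-1}}$ at time $\tau_{n-1}$, by uniqueness of the Dol\'eans-Dade exponential started at $\tau_{n-1}$. Given predictable $\phi^{(n)}$ with $0<\phi^{(n)}\le 1$ and $Z^{(n)}(\phi^{(n)}\cdot S^{\tau_n})\in{\cal M}_{loc}(P)$, I would define $\phi:=\sum_n I_{\Rbrack\tau_{n-1},\tau_n\Rbrack}\phi^{(n)}$, which is predictable with $0<\phi\le 1$. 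With $c_n:=Z_{\tau_{n-1}}/Z^{(n)}_{\tau_{n-1}}\in{\cal F}_{\tau_{n-1}}$, the process $c_n I_{\Rbrack\tau_{n-1},\tau_n\Rbrack}$ is predictable and locally bounded (after a standard stopping on $\{c_n\le M\}$), and the key identity transforms $\phi\cdot(ZS)$ on $[0,\tau_k]$ into the finite sum $\sum_{n=1}^{k}(c_n\phi^{(n)} I_{\Rbrack\tau_{n-1},\tau_n\Rbrack})\cdot(Z^{(n)}S^{\tau_n})$ of local martingales. Localizing at $(\tau_k)$ yields $\phi\cdot(ZS)\in{\cal M}_{loc}(P)$, proving $Z={\cal E}(N)\in{\cal Z}_{loc}(S)$.
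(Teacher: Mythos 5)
Your argument is correct in substance, and for the local-martingale property and for part (i) it coincides with the paper's proof (stop at $\tau_k$, reduce to the finite sum $N^{\tau_k}=\sum_{n\le k}I_{\Rbrack\tau_{n-1},\tau_n\Rbrack}\cdot N^{(n)}$, and note that the jumps of $N$ on $\Rbrack\tau_{n-1},\tau_n\Rbrack$ are those of $N^{(n)}$). You diverge in (ii) and (iii). For (ii) the paper records the identity $V^{(E)}(I_{\Rbrack\sigma,\tau\Rbrack}\cdot M)=I_{\Rbrack\sigma,\tau\Rbrack}\cdot V^{(E)}(M)$, deduces $(V^{(E)}(N))^{\tau_n}=\sum_{k\le n}I_{\Rbrack\tau_{k-1},\tau_k\Rbrack}\cdot V^{(E)}(N^{(k)})\in{\cal A}^+_{loc}(P)$, and then simply invokes the stability of ${\cal A}^+_{loc}(P)$ under localization (Lemma 1.35 of Jacod--Shiryaev), i.e.\ $\left({\cal A}^+_{loc}(P)\right)_{loc}={\cal A}^+_{loc}(P)$. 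Your Borel--Cantelli diagonalization is in effect a hand-made proof of that stability; it works, but you should also make the resulting sequence nondecreasing (your $\rho_k$ need not be monotone --- replace it by $\max_{j\le k}\rho_j$ and bound the expectation by the finite sum of the individual bounds); the citation buys all of this for free. For (iii) the paper linearizes first: ${\cal E}(M)$ is a $\sigma$-martingale density for $S$ iff $\varphi\cdot S+\varphi\cdot[S,M]\in{\cal M}_{0,loc}(P)$ for some predictable $0<\varphi\le 1$, after which patching the $\phi_n$'s along $\Rbrack\tau_{n-1},\tau_n\Rbrack$ is the same additive localization as before. Your multiplicative identity ${\cal E}(N)_t=\bigl({\cal E}(N)_{\tau_{n-1}}/{\cal E}(N^{(n)})_{\tau_{n-1}}\bigr){\cal E}(N^{(n)})_t$ on $\Lbrack\tau_{n-1},\tau_n\Rbrack$ is correct (uniqueness of the Dol\'eans-Dade equation, using ${\cal E}(N^{(n)})>0$) and leads to the same conclusion, but it forces you to carry the ${\cal F}_{\tau_{n-1}}$-measurable factors $c_n$ and, more importantly, you pass silently between the two formulations ``$Z(\phi\cdot S)\in{\cal M}_{loc}(P)$'' and ``$\phi\cdot(ZS)\in{\cal M}_{loc}(P)$'', whose integrands are in general different; these are equivalent descriptions of ``$ZS$ is a $\sigma$-martingale'', but the identification deserves a sentence. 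The paper's linearized criterion is designed precisely to avoid both frictions.
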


\begin{proof}It is obvious that
$$N^{\tau_n}=\sum_{k=1}^n I_{\Rbrack \tau_{k-1},\tau_k\Rbrack}\cdot N^{(k)}\in {\cal M}_{0,loc}(P).$$
This proves that $N\in \left({\cal M}_{0,loc}(P)\right)_{loc}={\cal M}_{0,loc}(P)$, and ${\cal E}(N)>0$ since
$$
1+\Delta N=1+\Delta N^{(n)}>0\ \ \ \mbox{on}\ \ \ \Rbrack \tau_{n-1},\tau_n\Rbrack,\ \ \ \ \ n\geq 1.$$
Then, due to the definition of the operator $V^{(E)}$ given by (\ref{VEntropy}), it is also easy to remark that $V^{(E)}(I_{\Rbrack \sigma,\tau\Rbrack}\cdot M)=I_{\Rbrack \sigma,\tau\Rbrack}\cdot V^{(E)}(M)$ for any local martingale $M$ (with $1+\Delta M\geq 0$) and any pair of stopping times $\tau$ and $\sigma$ such that $\tau\geq\sigma$. Thus, we get
$$
\left(V^{(E)}(N)\right)^{\tau_n}=\sum_{k=1}^n I_{\Rbrack \tau_{k-1},\tau_k\Rbrack}\cdot V^{(E)}(N^{(k)})\in {\cal A}^+_{loc}(P).$$
Hence, we deduce (thanks to Lemma 1.35 of \cite{JacodandShirayev2003}) that $V^{(E)}(N)\in\left({\cal A}^+_{loc}(P)\right)_{loc}={\cal A}^+_{loc}(P)$. This ends the proof of assertion (i) and (ii) of the lemma. To prove the last assertion, we first remark that ${\cal E}(M)$ is a $\sigma$-martingale density for $S$ if and only if there exists a predictable process $\varphi$ such that $0<\varphi\leq 1$ and
$$
\varphi\cdot S+\varphi\cdot [S,M]\in {\cal M}_{0,loc}(P).$$
Therefore, since ${\cal E}(N^{(n)})$ is a $\sigma$-martingale density for $S^{\tau_n}$ for each $n\geq 1$, then there exists $\phi_n$ such that $0<\phi_n\leq 1$ and
\begin{equation}\label{martingaleequation0}
Y_n:=\phi_n\cdot S+\phi_n\cdot [S^{\tau_n},N^{(n)}]\in {\cal M}_{0,loc}(P),\ \ \ \ \ \ \ \forall \ \ n\geq 1.\end{equation}
Put $\phi:=\sum_{k=1}^{+\infty} I_{\Rbrack \tau_{k-1},\tau_k\Rbrack}\phi_k$. Thus, it is easy to prove that $0<\phi\leq 1$, and
$$
\left(\phi\cdot S+\phi\cdot [S,N]\right)^{\tau_n}=\sum_{k=1}^{n} I_{\Rbrack \tau_{k-1},\tau_k\Rbrack}\cdot Y_k\in {\cal M}_{0,loc}(P).$$
Hence,
$
\phi\cdot S+\phi\cdot [S,N]\in \left({\cal M}_{0,loc}(P)\right)_{loc}={\cal M}_{0,loc}(P),$
 and hence ${\cal E}(N)$ is a $\sigma$-martingale density for $S$. This ends the proof of the lemma. \end{proof}

%\begin{acknowledgements} 
\textbf{Acknowledgements:} The first author is very grateful to Monique Jeanblanc, Kostas Kardaras, Miklos Rasonyi, Martin Schweizer, and Moris Strub for their useful comments and remarks that helped improving the paper and for informing him about Frittelli's paper \cite{fritteli2007}. All three authors would like to thank Freddy Delbaen, Valentina Galvani, and an anonymous associate editor for their valuable advices and suggestions that helped improving the paper tremendously. The three authors are very grateful for NSERC (the Natural Sciences
and Engineering Research Council of Canada) that supported financially this research through the Grant G121210818.
%\end{acknowledgements}

% BibTeX users please use one of
%\bibliographystyle{spbasic}      % basic style, author-year citations
%\bibliographystyle{spmpsci}      % mathematics and physical sciences
%\bibliographystyle{spphys}       % APS-like style for physics
%\bibliography{}   % name your BibTeX data base

% Non-BibTeX users please use

\end{document}